\newcommand{\herm}{^{\mbox{\scriptsize H}}}
\newcommand{\tran}{^{\mbox{\scriptsize T}}}
\newcommand{\oLoop}{^{{\scriptsize (0)}}}
\newcommand{\iLoop}{^{{\scriptsize (i)}}}
\newcommand{\xiLoop}{^{{\scriptsize (i-1)}}}
\newcommand{\iherm}{^{\scriptsize(i){\mbox{\scriptsize H}}}}
\newcommand{\xiherm}{^{\scriptsize(i-1){\mbox{\scriptsize H}}}}
\newcommand{\HypTst}[2]{\raise2.5ex\hbox{\scriptsize$#1$} \hspace{-1.3em}\displaystyle\gtreqless\hspace{-1.2em}\raise-1.1em\hbox{\scriptsize$#2$}}
\newcommand{\cref}[1]{C\ref{#1}}
\acrodef{3-G}{3-Generation}
\acrodef{3GPP}{3rd Generation Partnership Project}
\acrodef{a.k.a}{also-known-as}
\acrodef{AAS}{active antenna system}
\acrodef{ADMM}{alternating direction method of multipliers}
\acrodef{AoA}{angle-of-arrival}
\acrodef{AoD}{angle-of-departure}
\acrodef{AHB}{Abel hybrid bound}
\acrodef{A-TS}{alternating Taylor\'s series}
\acrodef{A-LASSO}{\emph{adaptive}-least absolute shrinkage and selection operator}
\acrodef{AWGN}{additive white Gaussian noise} 
\acrodef{AGC}{Automated Gain Control}
\acrodef{BER}{Bit Error Rate}
\acrodef{BB}{Bhattacharyya Bound}
\acrodef{BF}{beamformer}
\acrodef{BFGS}{Broyden-Fletcher-Goldfarb-Shanno}
\acrodef{BS}{base-station}
\acrodef{BBU}{baseband processing unit}
\acrodef{BSI}{Battery State Information}
\acrodef{cdf}{cumulative distribution function}
\acrodef{CEP}{Circular Error Probability}
\acrodef{CoMP}{Coordinated Multi-Point}
\acrodef{CIR}{Channel Impulse Response}
\acrodef{C-RDM}{Cross-Ranging Direction Matrix}
\acrodef{C-NLS}{Constrained Non-linear Least Squares}
\acrodef{CRLB}{Cram\'{e}r-Rao Lower Bound}
\acrodef{CSIT}{Channel State Information at the Transmitter}
\acrodef{CSI}{channel state information}
\acrodef{CP}{Cyclic-Prefix}
\acrodef{CB}{Coordinated Beamforming}
\acrodef{DC}{Distance Contraction}
\acrodef{DCT}{Distance Contraction Theory}
\acrodef{DE}{Distance Error}
\acrodef{DFT}{Discrete Fourier Transform}
\acrodef{DL}{Downlink}
\acrodef{DoA}{Direction-of-Arrival}
\acrodef{DSSS}{Direct Sequence Spread Spectrum}
\acrodef{DoF}{degree-of-freedom}
\acrodef{DPS}{Dynamic Point Selection}
\acrodef{EDF}{Euclidean Distance Function}
\acrodef{EDM}{Euclidean Distance Matrix}
\acrodef{EFIM}{Equivalent Fisher Information Matrix}
\acrodef{EHF}{Extremely High Frequency}
\acrodef{EKT}{Euclidean Kernel Transformation}
\acrodef{EK}{Euclidean Kernel}
\acrodef{ERII}{Equivalent Ranging Information Intensity}
\acrodef{ETSI}{European Telecommunications Standards Institute}
\acrodef{EH}{energy harvesting}
\acrodef{FIM}{Fisher Information Matrix}
\acrodef{FCC}{Federal Communications Commission}
\acrodef{FDD}{Frequency-Division-Duplex}
\acrodef{FP}{fractional programming}
\acrodef{GLE}{Geometric-constrained Location Estimation}
\acrodef{G-LS}{Global Least Squares}
\acrodef{GNSS}{Global Navigation Satellite System}
\acrodef{GP}{Gaussian Process}
\acrodef{GP-LVM}{Gaussian Process Latent Variable Model}
\acrodef{GPS}{Global Positioning System}
\acrodef{GDC}{Global Distance Continuation}
\acrodef{GDOP}{Geometric Dilution of Precision}
\acrodef{GTRS}{Generalized Trust Region Subproblems}
\acrodef{HCRB}{Hammersley-Chapmann-Robbins Bound}
\acrodef{HB}{Hybrid Bound}
\acrodef{HOSVD}{High-Order Singular-Value-Decomposition}
\acrodef{HPBW}{half power beamwidth}
\acrodef{iid}{independent identically distributed}
\acrodef{ICT}{Information Communication Technology}
\acrodef{I-EKT}{Inverse Euclidean Kernel Transformation}
\acrodef{IoT}{Internet-of-Things}
\acrodef{IIoT}{Industrial Internet-of-Things}
\acrodef{iff}{if and only if}
\acrodef{IFT}{Inverse Fourier Transform}
\acrodef{ITU}{International Telecommunication Union}
\acrodef{I/N}{Interference-to-Noise}
\acrodef{ID}{information decoding}
\acrodef{JT}{Joint Transmission}
\acrodef{JT-CoMP}{Joint Transmission Coordinated Multi-Point}
\acrodef{KKT}{Karush-Kuhn-Tucker}
\acrodef{KLT}{Karhunen-Lo\'{e}ve Transform}
\acrodef{LASSO}{least absolute shrinkage and selection operator}
\acrodef{LARSO}{Least Absolute Residual and Selection Operator}
\acrodef{LBSs}{Location Based Services}
\acrodef{LM}{Levenberg-Marquardt}
\acrodef{LQ}{Link-Quality}
\acrodef{LS}{Least Squares}
\acrodef{LLS}{Linearized Least Squares}
\acrodef{LT}{Location-Tracking}
\acrodef{LTE}{Long Term Evolution}
\acrodef{LoS}{line-of-sight}
\acrodef{LOESS}{Locally weighted Scatterplot Smoothing}
\acrodef{LP}{Linear Programming}
\acrodef{LHS}{ left-hand side}
\acrodef{MAP}{Maximum A Posteriori}
\acrodef{MOO}{Multiple-Objective Optimization}
\acrodef{MOS}{Mean Opinion Score}
\acrodef{MIMO}{multiple-input-multiple-output}
\acrodef{MU-MIMO}{multi-user multiple-input multiple-output}
\acrodef{MU-MISO}{multi-user multiple-input single-output}
\acrodef{MMSE}{Minimum Mean Squared Error}
\acrodef{mmWave}{millimeter-wave}
\acrodef{MSE}{mean square error}
\acrodef{ML}{Maximum Likelihood}
\acrodef{MS}{mobile-station}
\acrodef{MDS}{Multidimensional Scaling}
\acrodef{MSPE}{Mean-Squared-Position-Error}
\acrodef{MUSIC}{MUltiple SIgnal Classification}
\acrodef{MRT}{maximum ratio transmission}
\acrodef{MURM}{Minimum User-Rate Maximization}
\acrodef{N-DC}{Negative-Distance Contraction}
\acrodef{NLOS}{non-line-of-sight}
\acrodef{NLS}{Non-Linear-Least Squares}
\acrodef{NP-hard}{Non-deterministic Polynomial-time hard}
\acrodef{NPRM}{Notice of Proposed Rulemaking}
\acrodef{OFDM}{orthogonal frequency-division multiplexing}
\acrodef{OMP}{Orthogonal Matching Pursuit}
\acrodef{OTDoA}{observed-Time-Difference-of-Arrival}
\acrodef{psd}{positive semi-definite}
\acrodef{PSD}{power spectral density}
\acrodef{pdf}{probability density function}
\acrodef{PDP}{Power Delay Profile}
\acrodef{P-DC}{Positive-Distance Contraction}
\acrodef{PE}{Position Error}
\acrodef{PEB}{Position Error Bound}
\acrodef{PREB}{position-rotation error bound}
\acrodef{PSM}{Positive Semi-definite Matrix}
\acrodef{POCS}{Projection On Convex Sets}
\acrodef{PCA}{Principal Component Analysis}
\acrodef{PPCA}{Probabilistic Principal Component Analysis}
\acrodef{PS}{power splitting}
\acrodef{QoD}{Quality-of-Design}
\acrodef{QoL}{Quality-of-Location}
\acrodef{QoS}{Quality-of-Service}
\acrodef{QoE}{Quality-of-Experience}
\acrodef{QoI}{Quality-of-Information}
\acrodef{R95}{95$\%$ Radius}
\acrodef{RBF}{Radial Basis Function}
\acrodef{RDM}{Ranging Direction Matrix}
\acrodef{REB}{Rotation Error Bound}
\acrodef{RF}{radio frequency}
\acrodef{R-GDC}{Range-Global Distance Continuation}
\acrodef{R-LS}{Regularized Least-Squares}
\acrodef{RSS}{Received Signal Strength}
\acrodef{RSSI}{Received Signal Strength Index}
\acrodef{RMB}{Reuven-Messer Bound}
\acrodef{RMSE}{root-mean-squared-error}
\acrodef{RII}{Ranging Information Intensity}
\acrodef{RR}{Ridge-Regression}
\acrodef{RRU}{remote radio unit}
\acrodef{RHS}{ right-hand side}
\acrodef{SCE}{Sparse Channel Estimator}
\acrodef{SCM}{spatial channel model}
\acrodef{SDP}{semidefinite programming}
\acrodef{SIMO}{Single-Input-Multiple-Output}
\acrodef{SISO}{Single-Input-Single-Output}
\acrodef{SLAM}{Simultaneous Localization and Mapping}
\acrodef{SMACOF}{Stress-of-a-MAjorizing-Complex-Objective-Function}
\acrodef{SQP}{Sequential Quadratic Programming}
\acrodef{SR-LS}{Squared-Range Least-Squares}
\acrodef{SR-GDC}{Square Range-Global Distance Continuation}
\acrodef{SER}{Symbol-Error-Rate}
\acrodef{SB}{Stochastic Bound}
\acrodef{SNR}{Signal-to-Noise Ratio}
\acrodef{SHF}{Super High Frequency}
\acrodef{SINR}{signal-to-interference-plus-noise ratio}
\acrodef{SD}{Steepest-Descent}
\acrodef{SA}{Simulated-Annealing}
\acrodef{SR}{Squared Range}
\acrodef{SCA}{successive convex approximation}
\acrodef{SOCP}{second-order cone program}
\acrodef{SWIPT}{simultaneous wireless information and power transfer}
\acrodef{SDR}{semidefinite relaxation}
\acrodef{SWIPT}{simultaneous wireless information and power transfer}
\acrodef{TDoA}{Time-Difference-of-Arrival}
\acrodef{TS}{Taylor's Series}
\acrodef{ToF}{Time-of-Flight}
\acrodef{ToA}{Time-of-Arrival}
\acrodef{TDD}{Time-Division-Duplex}
\acrodef{TTI}{Transmission Time Interval}
\acrodef{ULA}{uniform linear array}
\acrodef{UMTS}{Universal Mobile Telecommunications System}
\acrodef{URA}{uniform rectangular array}
\acrodef{UWB}{Ultra-WideBand}
\acrodef{UHF}{Ultra High Frequency}
\acrodef{UE}{user equipment}
\acrodef{WLS}{Weighted Least Squares}
\acrodef{WC}{Weighted Centroid}
\acrodef{WSRM}{Weighted Sum-Rate Maximization}
\acrodef{ZC}{Zadoff-Chu}
\acrodef{ZF}{Zero-Forcing}
\def\@IEEEfigurecaptionsepspace{\vskip\abovecaptionskip\relax}%
\newtheorem{proposition}{Proposition}
\newcounter{mytempeqncnt}
\newcommand{\subparagraph}{} 
\begin{document}
\title{Latency-Aware Multi-antenna SWIPT System with Battery-Constrained Receivers}
\author{
\IEEEauthorblockN{Dileep~Kumar,~\IEEEmembership{Student~Member,~IEEE,} Onel L. Alcaraz López,~\IEEEmembership{Member,~IEEE,} Satya Krishna Joshi,~\IEEEmembership{Member,~IEEE,} and~Antti~T\"{o}lli~\IEEEmembership{Senior~Member,~IEEE}}
\thanks{This work was supported by the European Commission in the framework of the H2020-EUJ-02-2018 project under grant no.~815056 ({5G}-Enhance) and Academy of Finland under grants no.~346208~(6Genesis Flagship).}
\thanks{This article was presented in part at the IEEE Int. Symp. Pers., Indoor, Mobile Radio Commun. (PIMRC), Sep 2021~\cite{dileep_PIMRC}, and in parts at the Int. Symp. Wireless Commun. Systems (ISWCS), Sep 2021~\cite{dileep_ISWCS}.  \textit{(Corresponding~author: Dileep~Kumar)} 
}
\thanks{Authors are with Centre for Wireless Communications (CWC), University of Oulu, FIN-90014 Oulu, Finland. (e-mail: \{dileep.kumar, onel.alcarazlopez, satya.joshi, antti.tolli\}@oulu.fi).}
}

\makeatletter
\let\old@ps@headings\ps@headings
\let\old@ps@IEEEtitlepagestyle\ps@IEEEtitlepagestyle
\def\confheader#1{%
\def\ps@IEEEtitlepagestyle{%
\old@ps@IEEEtitlepagestyle%
\def\@oddhead{\strut\hfill#1\hfill\strut}%
\def\@evenhead{\strut\hfill#1\hfill\strut}%
}%
\ps@headings%
}
\makeatother

\confheader{%
\indent \footnotesize{Accepted for publication in IEEE Transactions on Wireless Communications in Oct. 2022. (DOI: 10.1109/TWC.2022.3215864)}  
}

\maketitle


%

\begin{abstract}



Power splitting~(PS) based simultaneous wireless information and power transfer (SWIPT) is considered in a multi-user multiple-input-single-output broadcast scenario. Specifically, we focus on jointly configuring the transmit beamforming vectors and receive PS ratios to minimize the total transmit energy of the base station under the user-specific latency and energy harvesting (EH) requirements. The battery depletion phenomenon is avoided by preemptively incorporating information regarding the receivers' battery state and EH fluctuations into the resource allocation design. The resulting time-average sum-power minimization problem is temporally correlated, non-convex (including mutually coupled latency-battery queue dynamics), and in general intractable. We use the Lyapunov optimization framework and derive a dynamic control algorithm to transform the original problem into a {sequence of per-time-slot deterministic and independent subproblems}. The latter are then solved via two alternative approaches: i)~semidefinite relaxation combined with fractional programming, and ii)~successive convex approximation. Furthermore, we design a low-complexity closed-form iterative algorithm exploiting the Karush-Kuhn-Tucker optimality conditions for a specific scenario with delay bounded batteryless receivers. Numerical results provide insights on the robustness of the proposed designs to realize an energy-efficient SWIPT system while ensuring latency and EH requirements in a time dynamic network.


\end{abstract}


%
\begin{IEEEkeywords}
Beamforming, battery dynamics, convex optimization, energy harvesting, Karush-Kuhn-Tucker conditions, Lyapunov framework, power splitting, queue backlogs, SWIPT, sum power minimization   
\end{IEEEkeywords}


%
%
\section{Introduction}
\label{sec:Intro}

Powering and supporting the seamless and autonomous operation of \ac{IoT} deployments is becoming challenging given the exponential increase in the number of ubiquitous and low-power  devices. 
On one hand, wired charging is usually cost-prohibitive and/or inconvenient to deploy anywhere, especially when considering industrial \ac{IoT} deployments. On the other hand, traditional battery-powered solutions struggle with limited battery life and the corresponding replacement problem, which is neither economical nor eco-friendly. 
%
{Unfortunately, the energy scavenging from environmental sources (e.g., solar, wind, thermal, etc) may be also inappropriate (at least as a standalone) for many use cases with stringent \ac{QoS} requirements. This is because the associated energy supply is uncontrollable and can significantly oscillate according to  temporal/geographical/environmental circumstances.
Moreover, the ambient energy scavenging demand an add-on harvesting material and circuit, which in practice limits the form factor reduction to the desired levels for many use cases.
As an appealing alternative, \ac{RF}-based wireless energy transfer (WET) allows efficient on-demand charging of low-power devices over the air, which simplifies servicing and maintenance, promotes form factor reduction and durability increase of the end devices, and thus contributes to the realization of scalable and sustainable wireless networks~\cite{Lopez.2021}.}

Wireless networks natively include wireless information transfer processes, hence WET appears naturally combined with them.
The \ac{SWIPT} technology,  which enables efficient utilization of radio resources by wirelessly charging battery-limited devices while simultaneously conveying useful information~\cite{Meng-HU_HSU-2016, Choi-SWIPT-2016, Clerckx-2019-JSAC}, is an example of it. 
Specifically, the
downlink multi-antenna broadcast \ac{SWIPT} system has attracted much attention from the research community and has been widely studied in the literature. 
As an example, the early pioneering work~\cite{Zhang-MIMO-Braoadcasting-2013} characterizes the underlying rate-energy trade-offs of a broadcast \ac{SWIPT} system that includes a multi-antenna \ac{BS} communicating with several \acp{UE} in downlink. Therein, each \ac{UE} may perform both \ac{EH} and \ac{ID} functions on the received signal, i.e., either by applying the time-switching (TS)~\cite{Zhang-MIMO-Braoadcasting-2013} or \ac{PS}~\cite{Zhang-MIMO-Braoadcasting-2013, Zhou-Zhang-2013} protocol. 
{Meanwhile, the authors in~\cite{Rv_Delay-aware} investigate the problems of maximizing the effective capacity and energy efficiency by considering both \ac{PS} and TS architectures subject to a fixed minimum harvested energy and fixed \ac{QoS} delay requirements for receivers.} 
Although a {TS} scheme simplifies the receiver design, it hinders the full exploitation of radio resources since the wireless information and WET processes are not really simultaneous in time. Thus, motivating the use of \ac{PS} schemes which generally {achieve higher spectral efficiency~\cite{Xiao-JSAC-2016, Rv_Delay-aware} and energy efficiency~\cite{Rv_Delay-aware}}. A comprehensive overview of the aforementioned PS and TS schemes and other less popular \ac{SWIPT} schemes, such as the so-called spatial switching~(SS) and integrated receiver~(IR), is provided in~\cite{Lu.2015}.

It is worth highlighting that interference substantially reduces the data rate achievable at each link, thus it has an adverse impact on \ac{ID} performance. Meanwhile,
in terms of WET, the interference is not a foe but a friend since the power of  interfering signals can easily be transformed into useful energy, and conceivably used for recharging  \ac{EH} \acp{UE}~\cite{Ikhlef-ComLet-2014, Xu-Zhang-TSP-2014, Park-TWC-2014,Gautam_-2021_Journal}. 
Due to the simultaneous multi-user \ac{RF} \ac{EH}, the SWIPT beamformer design inherently has a multicast structure.
Note that finding a good balance between \ac{EH} and \ac{ID} is key for an efficient \ac{SWIPT} implementation. 
For scenarios with separated \ac{EH} and \ac{ID} receivers, the authors of~\cite{Ikhlef-ComLet-2014} study the minimization of the worst-case \ac{MSE} of the received information signals under transmit power and receive EH constraints. In~\cite{Xu-Zhang-TSP-2014}, the transmit beamformer is designed to maximize the weighted sum of users' harvested power subject to per-user \ac{SINR} requirements. Meanwhile, authors in~\cite{Park-TWC-2014} characterize the achievable rate-energy trade-off in a SWIPT multi-user MIMO interference channel. 
In~\cite{Qingjiang-TWC-SWIPTKey_2014}, a joint beamforming and \ac{PS} optimization is carried out to minimize the BS transmit power, under fixed \ac{SINR} and \ac{EH} \ac{QoS} constraints per user, with the help of the \ac{SDR} framework. 
This work is extended in~\cite{Shi-SWIPT-2014}, wherein the authors develop a low-complexity and computationally efficient solution via \ac{SOCP}. 
Authors of~\cite{Zhang-TVT-2017, Dong-ICC-2016} also extend~\cite{Qingjiang-TWC-SWIPTKey_2014} by studying and optimizing the transceiver design under  \ac{MSE} and \ac{EH} constraints. Meanwhile, a joint transmit power allocation and receiver \ac{PS} mechanism is proposed in~\cite{Baosheng-Zhang-ComLet-2016} to maximize the minimum \ac{SINR} at the \acp{UE}. Assuming co-located \ac{EH} and \ac{ID} receivers, the minimization of the sum power consumption is studied in~\cite{Gautam_-2021_Journal, Timotheou-TWC-2014, Zong-TWC-2016}, while~\cite{Qingjiang-TSP-2016} considers the energy efficiency maximization, where beamforming vectors and \ac{PS} ratios are jointly designed subject to a fixed minimum \ac{EH} and \ac{SINR} constraints per~\ac{UE}.

All the aforementioned works mainly focus on a downlink multi-user \ac{SWIPT} system under fixed users' \ac{QoS} requirements. Moreover, they assume that the instantaneous harvested energy is always sufficient to support the devices' computational and communication operations. However, the energy required to decode the received data increases with  the downlink transmitted rate~\cite{Javier-Battery-2014, Zhirui-Battery-2017}, and thus the above assumption may not be always satisfied.
Specifically, if the energy required for the information decoding operation surpasses the current energy availability, 
the UE could turn off, or at least interrupt the ID operations,
thus causing a service outage. This critical problem 
has lately attracted significant research effort in diverse setups.
%
%
For  information-only transfer systems with limited energy availability at the receivers, authors of~\cite{Javier-Battery-2014} investigate the downlink transmission strategies and provide the feasible maximum supported data rates, which are mainly restricted by the receivers' instantaneous battery levels. The previous work is extended to coordinated multi-cell systems in~\cite{Zhirui-Battery-2017}. {Meanwhile, authors of~\cite{Zhaoxia-BatteryConf-2016, Sun-BatteryConf-2018, Zhirui-Battery-2020} aim to maximize the network sum rate while considering the battery depletion problem, for which they establish a limited receiver battery constraint.  In~\cite{Zhaoxia-BatteryConf-2016, Zhirui-Battery-2020}, the solutions are only based on the instantaneous states of the batteries, while~\cite{Sun-BatteryConf-2018} applies the Lyapunov optimization framework to design an online dynamic control algorithm.

The Lyapunov based solution has been widely used in many practical applications (as,  e.g., in \cite{Rv_Buffer-Aided, Rv_Optimal_Online, Rv_Energy_Efficiency, Rv_Resource_and_Task}) to transform the long-term average stochastic problem into a series of per-time-slot deterministic optimization subproblems, which do not require any a priori information on the system uncertainties, while efficiently capturing the non-stationary evolution of queue backlogs per user. In~\cite{Rv_Buffer-Aided}, the authors propose a \ac{SWIPT}-enabled mechanism to increase the average secrecy throughput by considering the data and energy buffering at the wirelessly powered relays. Meanwhile, the authors of~\cite{Rv_Optimal_Online} design an online adaptive resource allocation policy to maximize the fairness among different mobile clients, while the trade-off between response delay and energy efficiency is studied in~\cite{Rv_Energy_Efficiency}. Finally, authors in \cite{ Rv_Resource_and_Task} propose a task scheduling algorithm for SWIPT systems to minimize the energy consumption, while satisfying a minimum average data and task performing rate. }

%


\vspace*{-10pt}
\subsection{Contributions}
\label{subsec:contribution}
\vspace*{-5pt}
Although the state-of-the-art research on SWIPT-enabled systems is vast, the increasingly stringent system QoS requirements, low energy availability, and battery dynamics over time still require extensive study, which motivates our research in this work. 
Specifically, we provide the joint optimization of transmit beamforming vectors and receive \ac{PS} ratios that concurrently satisfies the user-specific latency and \ac{EH} requirements of a \ac{PS}-based \ac{SWIPT} system. 
Our proposed radio resource allocation strategies preemptively incorporate information related to the users' battery state and EH fluctuations while efficiently avoiding the battery depletion phenomenon in a delay bounded time dynamic mobile access network. To the best of our knowledge, this paper, whose preliminary and reduced-length versions can be found in our earlier works~\cite{dileep_PIMRC, dileep_ISWCS}, is the first to jointly consider the interdependence of
latency-battery dynamics for \ac{SWIPT}-enabled systems. The key contributions of this paper are as follows:
\begin{itemize}
    \item We formulate a long-term transmit energy minimization problem subject to users' probabilistic queue backlog constraint and a maximum harvested power requirement. The transmit beamforming vectors and receive \ac{PS} ratios are jointly designed considering explicitly the minimum energy needed to decode downlink transmitted data messages and to support battery-limited receivers' operations. 
    Our proposed radio resource allocation schemes allow meeting the latency and EH requirements at the \acp{UE} while avoiding their battery depletion.
    \item We employ the Lyapunov optimization framework, specifically the drift-plus-penalty function, to transform the time-average problem {into a sequence of per-time-slot deterministic subproblems, while efficiently capturing the non-stationary evolution of queue backlogs}. To circumvent the  temporally correlated user-specific battery and EH constraints, we introduce a virtual spare battery queue for each user. A dynamic control algorithm is then provided for solving the time-correlated and cumbersome average sum-power minimization problem.
    \item The beamformers and PS ratios are coupled in \ac{SINR} and EH expressions, and are not jointly convex. To circumvent this, {two alternative approaches}, namely \ac{SDR} combined with \ac{FP} quadratic transform technique, and a linear Taylor series approximation via \ac{SCA} framework, are adopted. Both proposed methods achieve  efficient solutions, each with specific convergence and complexity characteristics. 
    Specifically, the \ac{SDR}-FP based method provides faster convergence in terms of  the required approximation point updates for the considered setup configuration. In contrast, the \ac{SCA} framework has much lower computational complexity per iteration, and the optimized beamformers are obtained directly. 
    It is worth highlighting that the constrained optimization schemes proposed in~\cite{Qingjiang-TWC-SWIPTKey_2014, Shi-SWIPT-2014, Zhang-TVT-2017, Dong-ICC-2016, Baosheng-Zhang-ComLet-2016, Timotheou-TWC-2014, Zong-TWC-2016, Qingjiang-TSP-2016} can not be used directly here due to the underlying interdependence of latency-battery queue dynamics. Thus, our proposed solutions are significantly more advanced, and provide a {systematic} approach for solving {problems with} mutually coupled and non-linear time-varying constraints. 
    %
    %
    \item A low-complexity and computationally efficient iterative algorithm is proposed for a special scenario where the latency constrained SWIPT-enabled receivers operate without energy storage, and the harvested energy is made immediately available to support their operations~\cite{Qingjiang-TWC-SWIPTKey_2014, Shi-SWIPT-2014}. Specifically, the proposed algorithm exploits the \ac{KKT} optimality conditions, and requires evaluating only closed-form mathematical expressions. This leads to computationally efficient, thus, latency-friendly, implementations.
    %
\end{itemize}
The proposed methods provide insights into the trade-offs between the long-term achievable harvested energy at \acp{UE} and required  transmit energy at the \ac{BS}, while ensuring the {user-specific} latency and minimum battery energy requirements to support the persistent receivers' operations. The simulation results manifest the robustness of the proposed designs to realize energy-efficient \ac{SWIPT} systems for delay bounded applications in a time dynamic mobile access~network.

\vspace*{-5pt}
\subsection{Organization and Notations}
\label{subsec:Organization-Notations}
 \vspace*{-5pt}
The remainder of this paper is organized as follows. In Section~\ref{sec:model}, we describe the system architecture and  optimization problem. Section~\ref{sec:Lyapunov} provides a dynamic control algorithm for the average sum-power minimization problem. In Section~\ref{sec:Prop_Approximation}, we propose the joint beamforming and \ac{PS} optimization. In Section~\ref{subsec:Infinite_Battery_ISWCS}, we study the batteryless scenario. Simulation results are presented in Section~\ref{sec:sim-results}, while conclusions are drawn in Section~\ref{sec:Conclusion}.

\noindent \textit{Notations}: We use italic, boldface lower- and upper-case letters to denote scalars, vectors and matrices, respectively. Notation $\mathbb{C}^{N \! \times \! M}$ represents the space of ${N \! \times \! M}$ complex matrices. The norm and the real part of a complex number is represented with $|\cdot|$ and $\Re\{\cdot\}$, respectively. For an arbitrary-size matrix $\mathbf{X}$, the superscript $\mathbf{X}^{\scriptsize -1}$, $\mathbf{X}\herm$ and $\mathbf{X}\tran$ donates respectively inverse, conjugate transpose and transpose operation. $[\mathbf{X}]_{m,n}$ is the $(m,n)$-{th} element of~$\mathbf{X}$, and $[x]^{\scriptsize +}\!\triangleq\!\mathrm{max}(0,x)$. Notation $\mathbb{E}[\cdot]$ represents the statistical expectation operation.


%
%

\section{System Architecture and Problem Formulation}
\label{sec:model}

\subsection{System model}
\label{subsec:Sys_model}
%
%
We consider a \ac{MU-MISO} \ac{SWIPT} system as shown in Fig.~\ref{fig:System-Model}. Here, a set $\mathcal{K}\!\triangleq\!\{1,2,\ldots,K\}$ of single antenna \acp{UE} is served by a \ac{BS} equipped with $N_t$~transmit antennas in the downlink.
For simplicity, but without loss of generality, we consider a time-slotted frame structure where the slots are normalized to an integer value, i.e., $t\!\in\!\{T_f,2T_f,\ldots \}$ with a duration of~$T_f$ seconds each. Let $\mathbf{h}_{k}(t), \mathbf{f}_{k}(t)\!\in\!\mathbb{C}^{N_t \times 1}$ denote respectively the downlink channel vector and transmit beamforming vector corresponding to the $k$-th \ac{UE} during time slot~$t$. Then, the  signal received at the $k$-th \ac{UE} can be expressed~as
\begin{align}
    \label{eq:Rx-signal}
    y_k(t) & = \mathbf{h}_{k}\herm(t) \mathbf{f}_{k}(t) {d_{k}(t)}  
    + \!\! \sum\limits_{u \in \mathcal{K} \backslash k} \!\! \mathbf{h}_{k}\herm(t) \mathbf{f}_{u}(t) {d_{u}}(t) + {w}_k(t) ,
\end{align}
where ${w}_k\in\mathcal{CN}(0,\sigma^2_k)$ denotes the additive white Gaussian noise~(AWGN) at the receiver, and $d_k$ is the downlink transmitted data symbol. Moreover, we assume independent and normalized data symbols, i.e., $\mathbb{E}\{d_k d_u^*\}=0$ and $\mathbb{E}\{|d_k|^2\}=1,$ $\forall k, u \in \mathcal{K}$ and $k\ne u$.

Similar to~\cite{Zhang-MIMO-Braoadcasting-2013, Zhou-Zhang-2013, Qingjiang-TWC-SWIPTKey_2014, Shi-SWIPT-2014, Zhang-TVT-2017, Dong-ICC-2016, Baosheng-Zhang-ComLet-2016, Gautam_-2021_Journal, Timotheou-TWC-2014, Zong-TWC-2016, Qingjiang-TSP-2016}, we assume that each user~$k\!\in\!\mathcal{K}$ implements \ac{PS} on the received signal~$y_k(t)$ for simultaneous~\ac{ID} and~\ac{EH} as shown in Fig.~\ref{fig:System-Model}. Let $\rho_k(t)\!\in\![0,1]$ represent the \ac{PS} ratio to the \ac{ID} circuit of $k$-th \ac{UE} during time slot~$t$. Then, the portion of the signal split to the \ac{ID} circuit can be expressed~as
\begin{align}
    \label{eq:Rx-signal-ID}
    y_k^{\mathrm{ID}}(t) & = \sqrt{\rho_k(t)} \Bigl( \mathbf{h}_{k}\herm(t) \mathbf{f}_{k}(t) {d_{k}}(t)
         + \!\!  \sum\limits_{u \in \mathcal{K} \backslash k} \!\! \mathbf{h}_{k}\herm(t) \mathbf{f}_{u}(t) {d_{u}}(t) + {w}_k(t) \Bigr) + \tilde{z}_k(t),
\end{align}
where $\tilde{z}_k\in\mathcal{CN}(0,\delta^2_k)$ is the additive noise at the \ac{ID} circuit of $k$-th \ac{UE}. The received \ac{SINR} of $k$-th \ac{UE} is given~by
\begin{align}
    \label{eq:SINR-ID}
     \Gamma_k(t) & = \frac{\rho_k(t) |\mathbf{h}_{k}\herm(t) \mathbf{f}_{k}(t)|^2}{\rho_k(t) \!\! \sum\limits_{u \in \mathcal{K} \backslash k} \! |\mathbf{h}_{k}\herm(t) \mathbf{f}_{u}(t)|^2 + \rho_k(t)\sigma_k^2 + \delta_k^2}.
\end{align}
Meanwhile, the remaining portion of the signal split to the \ac{EH} circuit of $k$-th \ac{UE} during time slot~$t$ is given~by
\begin{align}
    \label{eq:Rx-signal-EH}
    y_k^{\mathrm{EH}}(t) & = \sqrt{1-\rho_k(t)} \Bigl( \mathbf{h}_{k}\herm(t) \mathbf{f}_{k}(t) {d_{k}}(t) 
    + \!\!  \sum\limits_{u \in \mathcal{K} \backslash k} \!\! \mathbf{h}_{k}\herm(t) \mathbf{f}_{u}(t) {d_{u}}(t) + {w}_k(t) \Bigr).
\end{align}
Then, the harvested power at the \ac{EH} circuit of $k$-th UE is given~by
\begin{align}
    \label{eq:power-EH}
     {E}_k^h(t) = \zeta_k\bigl(1-\rho_k(t)\bigr) \Bigl(\sum\limits_{j \in \mathcal{K}} |\mathbf{h}_{k}\herm(t) \mathbf{f}_{j}(t)|^2 + \sigma_k^2 \Bigr),
\end{align}
where $\zeta_k\!\in\![0,1)$ represents the energy conversion efficiency. 
%
%
%
%
{
Note that we consider an EH model, as in~\cite{Zhang-MIMO-Braoadcasting-2013, Zhou-Zhang-2013, Qingjiang-TWC-SWIPTKey_2014, Shi-SWIPT-2014}, where the output direct current~(DC) increases linearly with the input RF power level. However, in practice, the energy conversion efficiency of a practical EH circuit is not constant due to the non-linearities introduced by, e.g., resistance, capacitance, and diode-connected transistors~\cite{Rv_Boshkovska-ComLet-2015}. Note that there is no generic EH model which can capture all practical issues of EH circuit components~\cite{Rv_Yong-TCom-2017}. Therefore, we are considering a broadly recognized, yet simple, linear model for analytical tractability and to facilitate the discussion. 
Indeed, we aim to capture, for the first time, the underlying interdependence of latency-battery queue dynamics for SWIPT-enabled time dynamic systems, for which leveraging a simple EH model is a natural first step. Moreover, the results and the performance trends shown in this paper should still, {approximately}, hold for the moderate input power level regime~\cite[Fig.~4]{Rv_Resource_and_Task}. Considering a more involved \ac{EH} model, which efficiently accounts for the sensitivity and saturation impairments of the non-linear circuit components, is an interesting topic for future extensions of this work.}

\begin{figure}[t]
\centering
\includegraphics[trim=0.02cm 0.0cm 0.02cm 0.02cm, clip, width=0.9\linewidth]{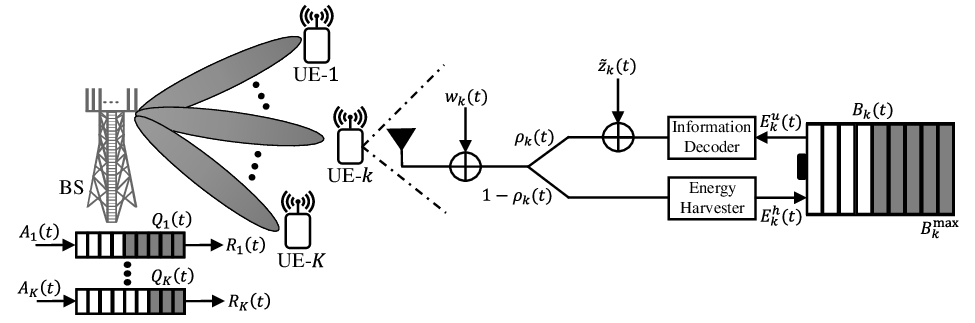} 
\caption{{The {SWIPT} system consists of one BS with  {user-specific} queues, and multiple battery-limited UEs, where each UE implements the PS scheme to perform EH and ID simultaneously.}} 
\label{fig:System-Model}
 \vspace*{-20pt}
\end{figure}
%

\subsection{Queueing \& latency model at network}
\label{subsec:Traffic_Model}

We assume that the \ac{BS} has queue buffers to store the network layer data of \acp{UE}~\cite[Ch. 5]{neely2010stochastic}. Let {$Q_k(t)$ denote the queue backlog of $k$-th \ac{UE} during time slot~$t$}, which evolves~as
\begin{equation}
\label{eq:Queue_Dynamics}
    Q_k(t+1) = \bigl[Q_k(t) - R_k(t) + A_k(t)\bigr]^{+}, \ \  \forall k,
\end{equation}
where $[x]^{+} \triangleq \mathrm{max}(x,0)$, $R_k(t) \triangleq \log_2(1+\gamma_k(t))$ is the downlink rate, and $\gamma_k(t)$ denotes the achievable \ac{SINR} of $k$-th \ac{UE} during time slot~$t$. In~\eqref{eq:Queue_Dynamics}, $A_k(t)$ denotes the downlink data arrival with a mean arrival rate of~$\mathbb{E}[A_k(t)] = \alpha_k,$ $\forall k\in\mathcal{K}$.

According to Little's law, the average delay (or latency) is proportional to long-term time-average queue length as $\lim\limits_{T \to \infty} \! \frac{1}{T} {\textstyle \sum_{t=0}^{T-1}} \mathbb{E}[Q_k(t)]$ \cite[Ch. 1.4]{gross2008fundamentals}. Thus, {we can use the queue backlogs $\{Q_k(t)\}$ as users' latency measures}, and impose the allowable threshold $\{Q_{k}^{\mathrm{th}}\}$ to each time slot. Specifically, we consider a probabilistic constraint on {user-specific queue length}~\cite{Dileep_Globecom2020}, defined~as     
\begin{equation}
    \label{eq:Prob-Queue-Const}
    \mathrm{Pr}\bigl\{ Q_k(t) \geq Q_{k}^{\mathrm{th}} \bigr\} \leq \epsilon, \ \ \forall t,
\end{equation}
where $\epsilon$ is a tolerable probability for delay bound violation.

\subsection{Battery storage \& consumption model at receiver}
\label{subsec:Battery_Model}

\begin{figure}[t]
\centering
\includegraphics[trim=0.02cm 0.0cm 0.02cm 0.02cm, clip, width=0.5\linewidth]{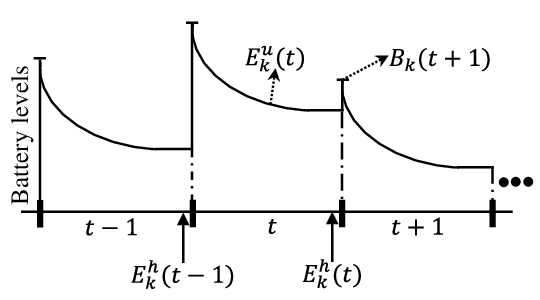} 
\caption{An illustration of the receivers' battery storage and consumption model at the $k$-th UE.} 
\label{fig:Battery_Dynamics}
 \vspace*{-20pt}
\end{figure}

The $k$-th \ac{UE} is equipped with a battery of finite capacity~$B_k^{\mathrm{max}}$. The battery energy level decreases in accordance with the circuit power consumption and the decoding operations of the downlink received data~\cite{Javier-Battery-2014, Zhaoxia-BatteryConf-2016, Zhirui-Battery-2017, Sun-BatteryConf-2018, Zhirui-Battery-2020}. The \ac{EH} circuit allows the \acp{UE} to recharge their battery by collecting the required energy from received \ac{RF} signals (see~\eqref{eq:power-EH}). Similar to~\cite{Zhaoxia-BatteryConf-2016, Sun-BatteryConf-2018}, we assume that the \ac{EH} system adopts the \emph{harvest-store-use} strategy~\cite[Section III]{Meng-HU_HSU-2016}, wherein the harvested energy in a given time slot can be only available in the subsequent time slots, as shown in Fig.~\ref{fig:Battery_Dynamics}. Thus, the battery dynamics $B_k(t+1)$ of the $k$-th \ac{UE} at the beginning of the time slot~$(t+1)$ is given~by
\begin{align}
    \label{eq:Battery_Dynamics}
        B_k(t+1) = \mathrm{min}\Bigl\{B_k^{\mathrm{max}}, \ \bigl[B_k(t)- E_k^u(t)\bigr]^{+} + E_k^h(t)  \Bigr\}, \ \ \forall k, 
\end{align}
where $E_k^h(t)$ is the harvested energy defined in~\eqref{eq:power-EH}. 
Moreover, $E_k^u(t)$ represents the energy spent by the $k$-th \ac{UE} during time slot~$t$, which we model as
\begin{align}
    \label{eq:Energy_Consumption}
     E_k^u(t) = T_f \bigl(P_k^{\mathrm{cir}} + P_k^{\mathrm{dec}}(t)\bigr), \ \ \forall k,
\end{align}
where {$P_k^{\mathrm{cir}}$ is the fixed power consumed by the receiver, including the front-end circuit, and reporting power consumption of the UE to the BS}. In~\eqref{eq:Energy_Consumption}, $P_k^{\mathrm{dec}}(t)$ is the power consumed by the decoding process. Note that the overall energy to be consumed in each time slot must not exceed the energy currently available in the users' battery, i.e., $0\!\leq E_k^u(t) \!\leq B_k(t), \ \forall t,k$. 
%
%
Specifically, by incorporating the receivers' battery state information in the design of beamformers and  \ac{PS} ratios, the \ac{BS} can adequately allocate the required  resources to ensure, i.e., maximum supported  rates~$R_k^{\mathrm{max}}(t)$ and maximum required \ac{EH}~$E_k^{\mathrm{max}}(t)$ for each user at each time slot.

The circuit power~$P_k^{\mathrm{cir}}$  is assumed to be a constant for each user. On the other hand, $P_k^{\mathrm{dec}}(t)$ mainly depends on the transmitted data rates, defined as $P_k^{\mathrm{dec}}(t) \triangleq f(R_k(t)), \ \forall k \in \mathcal{K}$. It is worth highlighting that there is {no}  standardized and extensively recognized model for $f(R_k(t))$ in the research community. {However, there is a consensus about the fact that {the decoding consumption always increases with the downlink transmit rate}.} Herein, we adopt the model presented in~\cite{Javier-Battery-2014}, which states that the energy consumed is  linearly increasing with the downlink transmitted data rates. Specifically, $f(R_k(t))\!\triangleq\!\vartheta_k^d R_k(t)$ where $\vartheta_k^d\!>\!0$ reflects the energy efficiency [Joules/bit] of the decoder implementation at the receiver device. Thus, combined with $P_k^{\mathrm{dec}}(t)$ and expression~\eqref{eq:Energy_Consumption}, the maximum allowable downlink rates~$R_k^{\mathrm{max}}(t)$ of each user~$k$ can be expressed~as\footnote{{Note that the beamformer design and PS strategy proposed in the paper can be extended by adopting the appropriate maximum supported rate~\eqref{eq:Max_Data_Rate} corresponding to any specific implementation of the decoding power consumption model~\cite[Section II]{Javier-Battery-2014}.}} \cite[Section II]{Javier-Battery-2014}
\begin{align}
    \label{eq:Max_Data_Rate}
    R_k(t) & \leq  R_k^{\mathrm{max}}(t) \triangleq \frac{1}{\vartheta_k^d} \biggl( \frac{E_k^u(t) - T_f P_k^{\mathrm{cir}}}{T_f} \biggr), \ \ \forall k.
\end{align}

Note that due to the stochastic nature of the radio channel, there exists a non-negligible probability that the links between the BS and certain \acp{UE} are in poor conditions (i.e., in a deep fading state). Thus, scheduling such \acp{UE} will probably provide little (if any) benefit. However, storing the energy and waiting for better channel gains may  improve the network performance. 
{Therefore, at the beginning of each time slot, the resource allocation process at the \ac{BS} should take into account the user-specific battery charge levels, and accordingly assign the downlink  rates. }
From \eqref{eq:Energy_Consumption}, we can observe that even if the~$k$-th UE is not served at a given scheduling time slot~$t$, its battery will still decrease {according} to $T_fP_k^{\mathrm{cir}}$. 
%
This fixed energy consumption  impacts the resource allocation algorithm for the EH process, i.e., the battery energy should not be instantaneously depleted and there should be always sufficient energy stored in the receiver's battery to support its persistent operations. 
%

The battery storage capacity of the~$k$-th UE is limited by $B_k^{\mathrm{max}}$, and the overflow (or extra) harvested energy will be eventually discarded by the user device. Thus, the harvested energy at each time slot~$t$ can be upper bounded by
\begin{align}
    \label{eq:Max_Energy_Harvested}
    E_k^h(t) \leq E_k^{\mathrm{max}}(t) \triangleq B_k^{\mathrm{max}} - B_k(t), \ \ \forall k.
\end{align}
%

To summarize, the current battery charge level at each \ac{UE} strictly constrains both the maximum supported downlink transmit rates~\eqref{eq:Max_Data_Rate} and the maximum EH requirements~\eqref{eq:Max_Energy_Harvested} at each time slot. 
We assume a standard time division duplex (TDD)-based \ac{CSI} acquisition from reciprocal uplink, where the current battery energy availability is also reported by the \acp{UE}~\cite{Javier-Battery-2014, Zhaoxia-BatteryConf-2016, Zhirui-Battery-2017, Sun-BatteryConf-2018, Zhirui-Battery-2020}.  
Hence, the resource allocation process not only considers \ac{CSI}, but also the 
residual energy in the receivers' battery, 
for the joint design of transmit beamforming vectors and receive \ac{PS} ratios in a delay bounded time dynamic mobile access network.



\subsection{Problem formulation}
\label{subsec:problem}

Our objective is to design a power-efficient resource allocation scheme that satisfies the latency requirement of each user while taking into account the user-specific battery energy limitations.  
Specifically, we jointly optimize transmit beamforming vectors and receive \ac{PS} ratios {in order} to minimize the \ac{BS} average transmit power subject to a probabilistic queue backlog constraint and a maximum harvested power {requirement} per user. Furthermore, the resource allocation strategies are designed by preemptively incorporating the  receivers’ battery and EH fluctuations while efficiently avoiding the battery depletion phenomenon. 
The 
network {average transmit sum-power minimization} 
problem can be formulated~as  
\begin{subequations}
\label{eq:P1-prob}
\begin{align}
	\displaystyle 
	 \underset{\mathbf{f}_k(t), \rho_k(t), \ \forall t}{\mathrm{min}} \ \ & \lim_{T \to \infty} \frac{1}{T} \sum\limits_{t=0}^{T-1}  \sum\limits_{k \in \mathcal{K}} \mathbb{E}\bigl[\|\mathbf{f}_{k}(t)\|^2\bigr]  \label{eq:P1} \\
	 \mathrm{s. t.} \ \
	&\displaystyle  \mathrm{Pr}\bigl\{ Q_k(t) \geq Q_{k}^{\mathrm{th}} \bigr\} \leq \epsilon, \ \ \forall k, \ \forall t, \label{eq:P1-C2} \\
	%
	%
	%
	%
	\begin{split}
	\label{eq:P1-C5}
	&\displaystyle R_k(t) \leq R_k^{\mathrm{max}}(t), \ \ \forall k, \ \forall t, \end{split}  \\
	%
	%
	%
    %
	\begin{split}
	\label{eq:P1-C6}
	&\displaystyle E_k^h(t) \leq E_k^{\mathrm{max}}(t), \ \ \forall k, \ \forall t, \end{split}  \\
	& \displaystyle  0 \leq \rho_k(t) \leq 1, \ \ \forall k, \ \forall t,  \label{eq:P1-C4}
\end{align}
\end{subequations}
where the expectation operation in \eqref{eq:P1} is with respect to random channel states and data arrivals processes. 
Note that~\eqref{eq:P1-C2} guarantees that the queue backlog of $k$-th \ac{UE} at each time slot~$t$ is less than $Q_k^{\mathrm{th}}$ with probability $1-\epsilon$, thus, it  ensures the desired {users' latency} requirements.

%
%

\section{Dynamic Control Algorithm via Lyapunov Framework}
\label{sec:Lyapunov}

Problem~\eqref{eq:P1-prob} consists of a time-average sum-power objective function~\eqref{eq:P1}, a non-linear probabilistic constraint~\eqref{eq:P1-C2}, rate constraints~\eqref{eq:P1-C5} including coupled and non-convex \ac{SINR} expressions, \ac{EH} constraint~\eqref{eq:P1-C6} including time-correlated  battery energy dynamics, which cannot be addressed directly in a tractable manner. In this section, we first handle the time-average and probabilistic intractability, and then {provide an online dynamic control algorithm exploiting the drift-plus-penalty framework and the Lyapunov optimization theory}~\cite{neely2010stochastic}. The proposed convex relaxations for the coupled and non-convex constraints are later provided in Section~\ref{sec:Prop_Approximation}.

We start by applying the well-known Markov's inequality~\cite{gross2008fundamentals}, and transform the non-linear probabilistic queue-length constraint~\eqref{eq:P1-C2} as $\mathrm{Pr}\bigl\{ Q_k(t) \geq Q_{k}^{\mathrm{th}} \bigr\} \leq \mathbb{E}[Q_k(t)]\bigl/Q_{k}^{\mathrm{th}} \leq \epsilon,$ $\forall k\in\mathcal{K}$. Thereby, problem~\eqref{eq:P1-prob} can be equivalently rewritten as
\begin{subequations}
\label{eq:P1-prob-X}
\begin{align}
	\displaystyle 
     \underset{\mathbf{f}_k(t), \rho_k(t), \ \forall t}{\mathrm{min}} \ \ & \lim_{T \to \infty} \frac{1}{T}  \sum\limits_{t=0}^{T-1}  \sum\limits_{k \in \mathcal{K}} \mathbb{E}\bigl[\|\mathbf{f}_{k}(t)\|^2\bigr]  \label{eq:P1-X} \\
	 \mathrm{s. t.} \ \
	&\displaystyle  \lim_{T \to \infty} \frac{1}{T}  \sum\limits_{t=0}^{T-1}  \mathbb{E}\bigl[Q_k(t)\bigr] \leq \epsilon Q_{k}^{\mathrm{th}}, \ \ \forall k, \label{eq:P1-C2-X} \\
	&\displaystyle \mathrm{constraints} \ \ \eqref{eq:P1-C5}  -  \eqref{eq:P1-C4}. \nonumber
\end{align}
\end{subequations}

Next, the time-average constraint~\eqref{eq:P1-C2-X} is tackled by recasting it as a queue stability problem~\cite[Ch.5]{neely2010stochastic}. Specifically, {we define a virtual data queue~$Z_k(t)$ for each user~$k$}, which evolves~as 
\begin{equation}
    \label{eq:Virtual-Queue}
    Z_k(t+1) = \bigl[Z_k(t) + Q_k(t+1) - \epsilon Q_k^{\mathrm{th}}\bigr]^{+},  \ \ \forall k.
\end{equation}
Note that the inequality constraint~\eqref{eq:P1-C2-X} is ensured only if the associated virtual queues $\{Z_k(t)\}_{\forall k}$ are stabilized \cite[Theorem 2.5]{neely2010stochastic}. Thus, to stabilize the virtual queues in~\eqref{eq:Virtual-Queue}, we employ the Lyapunov optimization framework, and design a dynamic control algorithm with  long-term stability requirements. It is worth highlighting that the feasible control action sets are coupled over time due to the temporally correlated {user-specific battery charge level~\eqref{eq:P1-C6}}~\cite{Zhaoxia-BatteryConf-2016, Sun-BatteryConf-2018}. 
To circumvent this issue, we introduce a virtual battery queue~$U_k(t)$ for each user~$k$, defined~as
\begin{equation}
    \label{eq:Pertubed_Battery}
    U_k(t) \triangleq B_k^{\mathrm{max}}-B_k(t),  \ \ \forall k.
\end{equation}
%
%
From~\eqref{eq:Pertubed_Battery}, we can observe that the spare energy availability at the receivers' battery increases inversely with the value of $U_k(t), \ \forall k\!\in\!\mathcal{K}$. 
%
Thus, to jointly stabilize the  coupled virtual data queues~\eqref{eq:Virtual-Queue} and  battery queues~\eqref{eq:Pertubed_Battery}, we now define a weighted quadratic Lyapunov function~\cite{neely2010stochastic} 
\begin{equation}
    \label{eq:Lyapunov-def}
    \mathcal{L}({\mathbf \Psi}(t)) \triangleq \frac{1}{2} \sum\limits_{k\in\mathcal{K}} \bigl\{Z_k(t)^{2} + {\omega_k}U_k(t)^{2} \bigr\},
\end{equation}
where ${\mathbf \Psi}(t) \! = \! \bigl[Z_k(t), Q_k(t), U_k(t), B_k(t) \bigl| \forall k \! \in \! \mathcal{K}\bigr]$ and $\omega_k\!>\!0$ is a linear scaling factor on the spare battery capacity of the $k$-th \ac{UE}. 
Note that the parameter $\{\omega_k\}_{\forall k}$ in~\eqref{eq:Lyapunov-def} can be considered as a free parameter (e.g., it may reflect energy to data bits conversion factor), which balances the user-specific contributions of latency-battery queue in the Lyapunov function. Specifically, different scaling on the virtual battery queue might lead to a different Pareto optimal allocation of  resources\footnote{The user-specific priority weights~$\{\omega_k\}_{\forall k}$ corresponding to the spare battery capacity~\eqref{eq:Lyapunov-def} are design parameters that can be tuned based on available statistical information, e.g., latency-battery queue states and channel conditions, to achieve the desired trade-off between BS transmit energy and receiver battery charge level. In fact, this is an interesting topic for future extensions.}. 
%
%
%
Then, the Lyapunov drift between two consecutive time slot is given~by
\begin{subequations}
\label{eq:Lyapunov-drif-def}
\begin{align}
\label{eq:Lyapunov-drif-1}
    \triangle ({\mathbf \Psi}(t)) & =  \frac{1}{2} \mathbb{E} \bigl[  \mathcal{L}({\mathbf \Psi}(t+1)) - \mathcal{L}({\mathbf \Psi}(t))  \bigl|  {\mathbf \Psi}(t) \bigr] \\ \displaybreak[0]
\label{eq:Lyapunov-drif-2}
    & =  \frac{1}{2} \mathbb{E} \biggl[ \sum\limits_{k\in\mathcal{K}} \Big\{ \bigl(Z_k(t+1)^{2} - Z_k(t)^{2} \bigr) + {\omega_k}\bigl(U_k(t+1)^{2} - U_k(t)^{2} \bigr) \Big\} \Bigl|  {\mathbf \Psi}(t) \biggr].
\end{align}
\end{subequations}

Next, by using expressions~\eqref{eq:Queue_Dynamics}, \eqref{eq:Battery_Dynamics}, \eqref{eq:Virtual-Queue} and~\eqref{eq:Pertubed_Battery} in \eqref{eq:Lyapunov-drif-def}, and after some algebraic simplifications, an upper bound for the Lyapunov drift~$\triangle({\mathbf \Psi}(t))$ is obtained~as
\begin{align}
    \label{eq:Lyapunov-drif}
    \triangle ({\mathbf \Psi}(t)) & \leq  \Omega  - \mathbb{E}\Bigl[   \sum\limits_{k\in\mathcal{K}} \bigl(Q_k(t) + A_k(t) + Z_k(t)\bigr) R_k(t) \bigl| {\mathbf \Psi}(t) \Bigr] \nonumber \\ & \qquad \qquad \qquad - \mathbb{E} \Bigl[   \sum\limits_{k\in\mathcal{K}} \omega_k \bigl(B_k^{\mathrm{max}} -  B_k(t)\bigr) E_k^h(t) \bigl|  {\mathbf \Psi}(t) \Bigr], 
\end{align}
where $\Omega$ is a positive constant term. 
More specifically, $\Omega=\zeta_Q+ \Phi_Q(t) + \zeta_B+ \Phi_B(t)$, while the following conditions\footnote{To obtain~\eqref{eq:Lyapunov-drif}, we used $([a_1\!+\!a_2\!-\!a_3]^{+})^2 \leq (a_1\!+\!a_2\!-\!a_3)^2, \ \forall \{a_1,a_2,a_3\}\geq 0$. 
{Further, we have assumed that the second moments of data arrival  $\mathbb{E} [A_k(t)^2 | {\mathbf \Psi}(t) ]$ and transmission $\mathbb{E} [R_k(t)^2 | {\mathbf \Psi}(t) ]$ processes in~\eqref{eq:zeta_Q}, and energy utilized~$\mathbb{E} [E_k^u(t)^2 | {\mathbf \Psi}(t) ]$ and harvested~$\mathbb{E} [E_k^h(t)^2 | {\mathbf \Psi}(t) ]$ processes in~\eqref{eq:zeta_B} for all $k\in\mathcal{K}$ are bounded~\cite[Ch.~5]{neely2010stochastic}.}} are satisfied in all time~slots~\cite{neely2010stochastic}:
\begin{subequations}
\begin{align}
\zeta_Q & \geq \frac{1}{2} \mathbb{E} \Bigl[ \sum\limits_{k\in\mathcal{K}} \bigl\{ A_k(t)^2 +  {R}_k(t)^2 \bigr\}  \bigl| {\mathbf \Psi}(t) \Bigr],  \label{eq:zeta_Q} \\
\Phi_Q(t) & =  \sum\limits_{k\in\mathcal{K}} \Bigl\{ \frac{1}{2} (\epsilon Q_k^{\mathrm{th}})^2 + \frac{1}{2} Q_k(t)^2 + Z_k(t)Q_k(t) + \bigl(Q_k(t) + Z_k(t)\bigr){A}_k(t) \Bigr\} , \\
\zeta_B & \geq \frac{1}{2} \mathbb{E} \Bigl[ \sum\limits_{k\in\mathcal{K}} \bigl\{ \omega_k E_k^u(t)^2 + \omega_k E_k^h(t)^2 \bigr\} \bigl| {\mathbf \Psi}(t) \Bigr],  \label{eq:zeta_B} \\
\Phi_B(t) & =  \sum\limits_{k\in\mathcal{K}} {\omega_k} \bigl(B_k^{\mathrm{max}} -  B_k(t)\bigr)E_k^u(t) .
\end{align}
\end{subequations}

We now define the {drift-plus-penalty} function~\cite{neely2010stochastic} for~\eqref{eq:P1-prob-X} as
\begin{align}
    \label{eq:drift-penality}
     &\triangle ({\mathbf \Psi}(t)) + V \mathbb{E} \Bigl[\sum\limits_{k \in \mathcal{K}} \|\mathbf{f}_{k}(t)\|^2  \bigl| {\mathbf \Psi}(t) \Bigr], 
\end{align}
{where $V\!\geq\!0$ is a trade-off parameter.} 
Thus, by minimizing the upper bound of~\eqref{eq:drift-penality} subject to constraints~\eqref{eq:P1-C5}~\!-~\!\eqref{eq:P1-C4} at each time slot, we can ensure the user-specific latency requirements and maintain the required battery energy levels at the receiver, while minimizing the  sum-power objective function~\eqref{eq:P1}. Herein, we employ the concept of {opportunistic minimization of an expectation}~\cite[Ch. 1.8]{neely2010stochastic} to minimize the upper bound of~\eqref{eq:drift-penality}, and {obtain a series of per-time-slot deterministic problem. Moreover, each subproblem~(21) can be independently solved based on the current queue buffer, battery charge level, and CSI, as summarized in Algorithm~1.}
%
%
\SetArgSty{textnormal}
\begin{algorithm}[]
	\caption{Dynamic control algorithm for~\eqref{eq:P1-prob}} 
	\label{algLyapunov}
	\SetAlgoLined
	For a given time slot~$t$, obtain battery levels~$\{B_k(t)\}$ and compute $R_k^{\mathrm{max}}(t)$ using~\eqref{eq:Max_Data_Rate} \\
	Observe the current queue backlogs $\{{Q}_k(t), {Z}_k(t)\}$, and solve the following problem:
\begin{subequations}
\label{eq:P1b}
\begin{align}
	\displaystyle 
	\begin{split}
  \underset{\mathbf{f}_k(t), {\gamma}_k(t), e_k(t), \rho_k(t)}{\mathrm{min}} \ \ & V  \sum\limits_{k \in \mathcal{K}} \|\mathbf{f}_{k}(t)\|^2 \! - \!  \sum\limits_{k\in\mathcal{K}} \! \bigl( Q_k(t) + A_k(t) + Z_k(t) \bigr)\log_{2}(1+\gamma_k(t))  \\ & \qquad \qquad \qquad \qquad \qquad \quad - \! \sum\limits_{k\in\mathcal{K}} \omega_k \bigl(B_k^{\mathrm{max}} -  B_k(t)\bigr) e_k(t)
	\end{split} \label{eq:P1b-objective} \\
	 \mathrm{s. t.} \ \
     \begin{split}
	 &\displaystyle  \gamma_k(t) \leq \frac{\rho_k(t)|\mathbf{h}_{k}\herm(t) \mathbf{f}_{k}(t)|^2}{\rho_k(t) \! \sum\limits_{u \in \mathcal{K} \backslash k} \! \! |\mathbf{h}_{k}\herm(t) \mathbf{f}_{u}(t)|^2 \! + \! \rho_k(t)\sigma_k^2 \!+ \! {\delta_k^2}}, \ \ \forall k, \end{split} \label{eq:P1b-C2} \\
	\begin{split}
	&\displaystyle {e}_k(t) \leq \zeta_k\bigl(1-\rho_k(t)\bigr)\Bigl(\sum\limits_{j=1}^{K} |\mathbf{h}_{k}\herm(t) \mathbf{f}_{j}(t)|^2  \! + \! \sigma_k^2 \Bigr),
	\ \ \forall k, 	\end{split} \label{eq:P1b-C3} \\
	\begin{split}
	\label{eq:P1b-C5}
	&\displaystyle \log_2(1+\gamma_k(t)) \leq R_k^{\mathrm{max}}(t), \ \ \forall k, \end{split}  \\
	\begin{split}
	\label{eq:P1b-C6}
	&\displaystyle {e}_k(t) \leq E_k^{\mathrm{max}}(t), \ \ \forall k, \end{split}  \\
	& \displaystyle  0 \leq \rho_k(t) \leq 1, \ \ \forall k.  \label{eq:P1b-C4}
\end{align}
\end{subequations} 
    \\
    Update queues $\{{Q}_k(t+1)\}$ and $\{{Z}_k(t+1)\}$ by using~\eqref{eq:Queue_Dynamics} and~\eqref{eq:Virtual-Queue}, respectively \\
    Update battery levels $\{{B}_k(t+1)\}$ by using~\eqref{eq:Battery_Dynamics} \\
    Set $t = t +1$, and go to step~$1$
\end{algorithm}
{Note that we have conveniently introduced the non-negative auxiliary variables~$\{\gamma_k(t), e_k(t)\}_{\forall k}$ to ease the problem tractability. Nevertheless, this does not affect the corresponding solution since
constraints~\eqref{eq:P1b-C2} and~\eqref{eq:P1b-C3} hold with equality at the optimality~\cite{boyd2004convex}. 
Moreover}, for a given queue state at beginning of each time slot~$t$, the transmit beamformers and receive \ac{PS} ratios can be computed independently as described in   Section~\ref{sec:Prop_Approximation}.

%
%

\vspace*{-5pt}
\section{Joint Beamformers and PS Ratios Design}
\label{sec:Prop_Approximation}

The joint beamformers and PS ratios design problem~\eqref{eq:P1b} introduced in Algorithm~\ref{algLyapunov} is still intractable due to the non-convex \ac{SINR} expressions~\eqref{eq:P1b-C2}, and the coupling between the optimization variables~$\{\mathbf{f}_k(t), \rho_k(t), \gamma_k(t)\}$ in~\eqref{eq:P1b-C2} and the optimization variables~$\{\mathbf{f}_k(t), \rho_k(t), e_k(t)\}$ in~\eqref{eq:P1b-C3}.
In this section, {the coupled and non-convex constraints in~\eqref{eq:P1b} 
are handled by using two alternative approaches}: i) \ac{SDR} combined with recently proposed \ac{FP} {quadratic transform} technique~\cite{FP_Shen_2018}, and ii) linear Taylor series approximation via \ac{SCA} framework~\cite{beck2010sequential}. 
Both proposed methods achieve efficient solutions, each with specific convergence  and  complexity characteristics. Specifically, for the considered setup configurations, the \ac{SDR}-FP based method provides faster convergence in terms of the required approximation point updates. 
In contrast, the \ac{SCA} framework has  considerably lower computational complexity per iteration and the optimized beamforming vectors are obtained directly, which will become clear in the following.

It is worth highlighting that due to the underlying interdependence of latency-battery queue dynamics, the  constrained optimization schemes proposed in~\cite{Qingjiang-TWC-SWIPTKey_2014, Shi-SWIPT-2014, Zhang-TVT-2017, Dong-ICC-2016, Baosheng-Zhang-ComLet-2016, Timotheou-TWC-2014, Zong-TWC-2016, Qingjiang-TSP-2016} can not be used directly to solve~\eqref{eq:P1b}. Thus, our proposed solutions are significantly more advanced, and provide a {systematic} approach for handling {problems with} mutually coupled constraints. 
Further, proposed approaches provide valuable insights on the joint optimization of beamformers and PS ratios that concurrently satisfies the latency and EH requirements per user,  
which is inherently an intractable and NP-hard optimization problem.
%
%
Note that in both approaches, the non-convex problem~\eqref{eq:P1b} is first recast as a sequence of convex subproblems, and then iteratively solved until the desired convergence of the objective function~\eqref{eq:P1b-objective}. In the following, we omit the time index~$t$ to simplify the notation.


\vspace*{-10pt}
\subsection{Solution {of~\eqref{eq:P1b} via SDR-FP} framework}
\label{subsec:SDR}
\vspace*{-5pt}

To begin with, let us define $\mathbf{F}_k \! \triangleq \! \mathbf{f}_k\mathbf{f}_k\herm,$ $\forall k\in\mathcal{K}$. {The main idea behind the \ac{SDR} framework is to {drop the rank-one constraints}, e.g., replace each semidefinite rank-one matrix $\mathbf{F}_k$ by a general-rank positive semidefinite matrix, i.e., $\mathbf{F}_k \! \succeq  \! 0, \ \forall k \! \in \! \mathcal{K}$. Then, an additional post-approximation procedure is required to obtain rank-one  beamforming vectors using, e.g., eigen-decomposition and possibly combined with some randomization techniques \cite{Luo-SDR-2010, Harri_2014}, as will become clear in the following.} Thus, by applying the standard SDR technique to~\eqref{eq:P1b}, and after some algebraic simplifications, we obtain the following {relaxed} problem
\begin{subequations}
\label{eq:P1-prob-2}
\begin{align}
	\displaystyle
	\begin{split}
	 \underset{\mathbf{F}_k, {\gamma}_k,  e_k, \rho_k}{\mathrm{min}} \ \ & V \sum\limits_{k \in \mathcal{K}} \mathrm{Tr}(\mathbf{F}_k) -  \sum\limits_{k\in\mathcal{K}} ( Q_k +  A_k  + Z_k )\log_{2}(1+\gamma_k) -   \sum\limits_{k\in\mathcal{K}} \omega_k (B_k^{\mathrm{max}} -  B_k) e_k
	\end{split}
	 \label{eq:P1-2} \\
	 \mathrm{s. t.} \ \
	\begin{split}
	& \displaystyle \frac{1}{\gamma_k} \mathbf{h}_k\herm \mathbf{F}_k \mathbf{h}_k  - \!\! \sum\limits_{u \in \mathcal{K} \backslash k} \!\! \mathbf{h}_{k}\herm \mathbf{F}_u \mathbf{h}_{k} \geq \sigma_k^2 +  \frac{\delta_k^2}{\rho_k},  \ \ \forall k, 	
	\end{split} \label{eq:P1-C1-2} \\
	\begin{split}
	& \displaystyle \frac{1}{e_k} \Bigl( { {\sum\limits_{j=1}^{K}} \mathbf{h}_k\herm \mathbf{F}_j \mathbf{h}_k} + \sigma_k^2 \Bigr) \geq \frac{1}{\zeta_k(1-\rho_k)} , \ \ \forall k, 	
	\end{split} \label{eq:P1-C2-2} \\
	\begin{split}
	\label{eq:P1b-C5-2}
	&\displaystyle \gamma_k \leq 2^{R_k^{\mathrm{max}}}-1, \ \ \forall k, \end{split}  \\
	\begin{split}
	\label{eq:P1b-C6-2}
	&\displaystyle {e}_k \leq E_k^{\mathrm{max}}, \ \ \forall k, \end{split}  \\
	& \displaystyle  0 \leq \rho_k \leq 1, \ \ \forall k,  \label{eq:P1-C4-2} \\
	& \displaystyle  {\mathbf{F}_k \succeq 0, \ \  \forall k},  \label{eq:P1-C3-2} 
\end{align}
\end{subequations}
where $\mathrm{Tr}(\cdot)$ denotes the trace operation. Note that the reformulated problem~\eqref{eq:P1-prob-2} is still non-convex due to the coupling between variables $\{\mathbf{F}_k , \gamma_k\}$ in~\eqref{eq:P1-C1-2} and $\{\mathbf{F}_k , e_k\}$ in~\eqref{eq:P1-C2-2}. 
Thus, managing the mutually coupled EH and SINR combinations is considerably more challenging than conventional constrained optimization with a fixed QoS requirement per user~\cite{Qingjiang-TWC-SWIPTKey_2014, Shi-SWIPT-2014}. 
%
%
To handle the fractional non-convexity in~\eqref{eq:P1-prob-2}, we use the recently proposed \ac{FP} {quadratic transform} technique~\cite{FP_Shen_2018}, also reproduced in Proposition~\ref{Prop1}, that has been shown to converge to a local solution.
Specifically, the \ac{FP} quadratic transform techniques decouple the numerator and the denominator of the concave-convex functions, thereby enabling the iterative optimization between primal  and auxiliary variables. 
This strategy works well for a variety of optimization problems, including scheduling, power control, and beamformer design~\cite{FP_Shen_2018}.

\begin{proposition}\label{Prop1}
Given a non-negative function~$A(\mathbf{x})\geq0$, a strictly positive function~$B(\mathbf{x})>0$, and a monotonic function $f(\cdot): \mathbb{R}\mapsto\mathbb{R}$, the coupled fractional constraint
%
%
%
%
%
%
$f\Bigl(\frac{A(\mathbf{x})}{B(\mathbf{x})}\Bigr)$ is equivalent to $f\bigl(2 \varpi \sqrt{A(\mathbf{x})} - \varpi_k^2 B(\mathbf{x}) \bigr)$, when the auxiliary variable $\varpi$ has the optimal value $\varpi^{\scriptsize (\star)}\! = \! \frac{\sqrt{A(\mathbf{x})}}{B(\mathbf{x})}$.
\end{proposition}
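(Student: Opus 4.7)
The plan is to prove the equivalence by explicit maximization over the scalar auxiliary variable $\varpi$. First, with $\mathbf{x}$ held fixed, I would regard $g(\varpi) \triangleq 2\varpi\sqrt{A(\mathbf{x})} - \varpi^{2} B(\mathbf{x})$ as a one-dimensional quadratic function of $\varpi \in \mathbb{R}$. Since $B(\mathbf{x}) > 0$ by hypothesis, the coefficient of $\varpi^{2}$ is strictly negative, so $g$ is strictly concave and attains a unique global maximum on the real line. The non-negativity assumption $A(\mathbf{x}) \geq 0$ simultaneously guarantees that $\sqrt{A(\mathbf{x})}$ is real and well-defined, so the quadratic form on the right-hand side makes sense for every $\mathbf{x}$.

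Next, I would locate the maximizer analytically. Differentiating and setting $\partial g/\partial \varpi = 2\sqrt{A(\mathbf{x})} - 2\varpi B(\mathbf{x}) = 0$ yields the stationary point $\varpi^{\star} = \sqrt{A(\mathbf{x})}/B(\mathbf{x})$, which is well-defined because $B(\mathbf{x}) > 0$ and which is a maximizer by strict concavity. Substituting back gives
\begin{equation*}
g(\varpi^{\star}) \;=\; 2 \cdot \frac{\sqrt{A(\mathbf{x})}}{B(\mathbf{x})} \cdot \sqrt{A(\mathbf{x})} \;-\; \frac{A(\mathbf{x})}{B(\mathbf{x})^{2}} \cdot B(\mathbf{x}) \;=\; \frac{A(\mathbf{x})}{B(\mathbf{x})},
\end{equation*}
so at $\varpi = \varpi^{\star}$ the argument inside the auxiliary-form expression coincides exactly with the original ratio $A(\mathbf{x})/B(\mathbf{x})$.

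Finally, I would invoke the monotonicity of $f$ to lift this pointwise algebraic identity to an equivalence between the two expressions for optimization purposes. Because $f$ is monotone on $\mathbb{R}$, the composition $\varpi \mapsto f(g(\varpi))$ attains its extremum over $\varpi$ at the same point $\varpi^{\star}$ that extremizes $g$, and the corresponding value equals $f(A(\mathbf{x})/B(\mathbf{x}))$. Consequently, any outer optimization over $\mathbf{x}$ of $f(A(\mathbf{x})/B(\mathbf{x}))$ is recovered by jointly optimizing $f(2\varpi\sqrt{A(\mathbf{x})} - \varpi^{2} B(\mathbf{x}))$ over $(\mathbf{x},\varpi)$ and using $\varpi^{\star}$ for the auxiliary variable, which is the claimed equivalence.

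The main (and in fact only) obstacle is to state the monotonicity step cleanly in a way that covers both increasing and decreasing $f$: for increasing $f$ the outer problem is a joint maximization, while for decreasing $f$ it is a minimization, but in either case the argument hinges on the fact that $g(\varpi)$ has a unique extremum whose location is independent of the direction of monotonicity of $f$. Once this is noted, the proof reduces entirely to the concave-quadratic calculation above, so no further machinery (Lagrangian duality, convex conjugacy, etc.) is needed.
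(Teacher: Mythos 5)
Your proof is correct and is essentially the same argument the paper relies on: the paper simply defers to the quadratic transform derivation in the cited fractional-programming reference, which is exactly the strict-concavity-in-$\varpi$ calculation you carried out (stationary point $\varpi^{\star}=\sqrt{A(\mathbf{x})}/B(\mathbf{x})$, maximal value $A(\mathbf{x})/B(\mathbf{x})$, then monotonicity of $f$). Your explicit handling of the increasing versus decreasing cases of $f$ is a welcome clarification but does not constitute a different route.
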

\begin{proof}
Proposition~\ref{Prop1} can be easily proved by following the steps in~\cite[Section II]{FP_Shen_2018}.
\end{proof}
%
%
%
%

%
%
%
\subsubsection{Convex approximation for constraint~\eqref{eq:P1-C1-2}}
\label{subsec:SINR_Approximation_SDR}~\\
\noindent By using Proposition~\ref{Prop1}, the  fractional term in~\eqref{eq:P1-C1-2}
\begin{subequations}
\begin{equation}
\label{eq:Fractional term 1}
\frac{1}{\gamma_k} \mathbf{h}_k\herm \mathbf{F}_k \mathbf{h}_k,  \ \ \forall k,
\end{equation}
is equivalent to
\begin{equation}
\label{eq:Fractional term 2}
2 \nu_k \sqrt{\mathbf{h}_k\herm \mathbf{F}_k \mathbf{h}_k} - \nu_k^2 \gamma_k,  \ \ \forall k,
\end{equation}
when the auxiliary variable $\nu_k$ has the optimal value
\begin{equation}
\label{eq:Fractional term 3}
\nu_k^{\scriptsize (\star)} =  \frac{1}{\gamma_k} \sqrt{\mathbf{h}_k\herm \mathbf{F}^{}_k \mathbf{h}_k},  \ \ \forall k.
\end{equation}
\end{subequations}
%
%
%
%
%
%
%

\subsubsection{Convex approximation for constraint~\eqref{eq:P1-C2-2}}
\label{subsec:EH_Approximation_SDR}~\\
\noindent Similarly, by using Proposition~\ref{Prop1}, the  fractional term in~\eqref{eq:P1-C2-2}
\begin{subequations}
\begin{equation}
\label{eq:Fractional term 1b}
\frac{1}{e_k} \Bigl( { {\sum\limits_{j=1}^{K}} \mathbf{h}_k\herm \mathbf{F}_j \mathbf{h}_k} + \sigma_k^2 \Bigr),  \ \ \forall k,
\end{equation}
is equivalent to
\begin{equation}
\label{eq:Fractional term 2b}
2 \mu_k \sqrt{  {\textstyle{\sum_{j=1}^{K}}} \mathbf{h}_k\herm \mathbf{F}_j \mathbf{h}_k + \sigma_k^2 } - \mu_k^2 e_k,  \ \ \forall k,
\end{equation}
when the associated auxiliary variable $\mu_k$ has the optimal value
\begin{equation}
\label{eq:Fractional term 3b}
\mu_k^{\scriptsize (\star)} =  \frac{1}{e_k} \sqrt{{\textstyle{\sum_{j=1}^{K}}} \mathbf{h}_k\herm \mathbf{F}_j \mathbf{h}_k + \sigma_k^2},  \ \ \forall k.
\end{equation}

\end{subequations}


\subsubsection{Approximated convex subproblem}~\\
\noindent {By setting a fixed approximation point $\{\nu\iLoop_k, \mu\iLoop_k\}$, and substituting~\eqref{eq:Fractional term 2} and \eqref{eq:Fractional term 2b} into \eqref{eq:P1-C1-2} and \eqref{eq:P1-C2-2}, respectively, the problem~\eqref{eq:P1-prob-2} can be expressed in convex form as}
\vspace*{-4pt}
\begin{subequations}
\label{eq:P1-prob-3}
\begin{align}
	\displaystyle
	\begin{split}
	 \underset{\mathbf{F}_k, {\gamma}_k, e_k, \rho_k}{\mathrm{min}} \ \ &  V \sum\limits_{k \in \mathcal{K}} \mathrm{Tr}(\mathbf{F}_k)  
	 - \sum\limits_{k\in\mathcal{K}} ( Q_k + A_k + Z_k )\log_{2}(1+\gamma_k) -  \sum\limits_{k\in\mathcal{K}}  \omega_k(B_k^{\mathrm{max}} -  B_k) e_k
	\end{split}
	 \label{eq:P1-3} \\
	 \mathrm{s. t.} \quad
	\begin{split}
	& \displaystyle 2 \nu\iLoop_k \sqrt{\mathbf{h}_k\herm \mathbf{F}_k \mathbf{h}_k} - \bigl(\nu\iLoop_k\bigr)^2 \gamma_k 
	-  \sum\limits_{u \in \mathcal{K} \backslash k} \!\! \mathbf{h}_k\herm \mathbf{F}_u \mathbf{h}_k \geq  \sigma_k^2 + \frac{\delta_k^2}{\rho_k},  \ \ \forall k, 	
	\end{split} \label{eq:P1-C1-3} \\
	\begin{split}
	& \displaystyle 2 \mu\iLoop_k \sqrt{  { \sum\nolimits_{j=1}^{K}} \mathbf{h}_k\herm \mathbf{F}_j \mathbf{h}_k + \sigma_k^2 } - \bigl(\mu\iLoop_k\bigr)^2 e_k \geq \frac{1}{\zeta_k(1-\rho_k)}, \ \ \forall k, 	
	\end{split} \label{eq:P1-C2-3} \\
	\begin{split}
	\label{eq:P1-C5-3}
	&\displaystyle \gamma_k \leq 2^{R_k^{\mathrm{max}}}-1, \ \ \forall k, \end{split}  \\
	\begin{split}
	\label{eq:P1b-C6-3}
	&\displaystyle {e}_k \leq E_k^{\mathrm{max}}, \ \ \forall k, \end{split}  \\
	& \displaystyle  0 \leq \rho_k \leq 1, \ \ \forall k,  \label{eq:P1-C4-3} \\
	& \displaystyle  \mathbf{F}_k 	\succeq 0, \ \ \forall k.  \label{eq:P1-C3-3} 
\end{align}
\end{subequations}
Note that~\eqref{eq:P1-prob-3} provides an approximate solution for~\eqref{eq:P1-prob-2} in the vicinity of $\{\nu\iLoop_k, \mu\iLoop_k\}$. Thus, by iteratively solving~\eqref{eq:P1-prob-3} while updating~$\{\nu\iLoop_k, \mu\iLoop_k\}$ with the solution of current iteration, we can find a  solution for~\eqref{eq:P1-prob-2}~\cite{FP_Shen_2018}.
Moreover, for a fixed operating point~$\{\nu\iLoop_k, \mu\iLoop_k\}$, the subproblem~\eqref{eq:P1-prob-3} can be efficiently solved using convex optimization tools, e.g., $\mathrm{CVX}$~\cite{cvx}.

For a given time slot~$t$, let $\big\{\mathbf{F}^{\star}_k(t), \gamma^{\star}_k(t), e^{\star}_k(t), \rho^{\star}_k(t) \big\}_{\forall k}$ denote the  solution obtained from~\eqref{eq:P1-prob-3}. However, note that the achievable \ac{SINR}~\eqref{eq:SINR-ID} and the maximum harvested power~\eqref{eq:power-EH} at $k$-th UE is computed by using actual beamforming vectors~$\mathbf{f}^{\star}_k(t)$ and \ac{PS} ratios~$\rho^{\star}_k(t)$. 
Therefore, {if $\mathbf{F}^{\star}_k(t)$ satisfies $\mathrm{Rank}\bigl(\mathbf{F}^{\star}_k(t)\bigr)\!=\!1, \forall k\in\mathcal{K}$, we can  write $\mathbf{F}^{\star}_k(t)\!=\!\mathbf{f}^{\star}_k(t) \mathbf{f}_k^{\star\mbox{\scriptsize H}}(t)$, and $\mathbf{f}^{\star}_k(t)$ is a feasible solution to original problem~\eqref{eq:P1b}. On the contrary, if $\mathrm{Rank}\bigl(\mathbf{F}^{\star}_k(t)\bigr)\!>\!1$, an additional post-approximation procedure is required to obtain the beamforming vectors, which inevitably degrades the achievable solution~\cite{Harri_2014, Luo-SDR-2010}. For example, one possible rank-one approximation 
is obtained by simply setting $\{\mathbf{f}_k^{\star}(t)\}_{\forall k}$ to be proportional to the eigenvector $\mathbf{q}_k^1(t)$ of $\{\mathbf{F}_k^{\star}(t)\}_{\forall k}$ associated with the largest eigenvalue~$\lambda^1_k(t)$, i.e., $\mathbf{f}^{\star}_k(t)\!=\!\sqrt{\varrho^{\star}_k(t)}
\mathbf{q}_k^1(t), \forall k\!\in\!\mathcal{K},$ 
where $\varrho_k$ is scaled to satisfy all the constraints in~\eqref{eq:P1-prob-3}~\cite{Harri_2014}. Specifically, after $\{\mathbf{q}_k^1(t)\}_{\forall k}$ are obtained,~\eqref{eq:P1-prob-3} is modified by replacing $\mathbf{F}_k(t)$ with $\widetilde{\mathbf{F}}_k(t)$, and introducing additional constraints $\widetilde{\mathbf{F}}_k(t)\!=\!\varrho_k(t)\mathbf{q}_k^1(t) \mathbf{q}_k^{1\mbox{\scriptsize H}}(t)$, and then solve the modified problem with optimization variables~${\varrho_k}, \forall k\!\in\!\mathcal{K}$  
%
%
(we refer the reader to~\cite{Harri_2014, Luo-SDR-2010} for more details).} 
%
%
The joint optimization of beamformers and \ac{PS} ratios, with the proposed \ac{SDR}-\ac{FP} based convex approximation, has been summarized in Algorithm~\ref{algFP}.
\vspace*{-8pt}
\SetArgSty{textnormal}
\begin{algorithm}[]
\DontPrintSemicolon
	\caption{{SDR-FP based} iterative algorithm for~\eqref{eq:P1-prob-3}}
	\label{algFP}
	\SetAlgoLined
	Initialize 	$\{\nu_k\oLoop, \mu_k\oLoop\}, \ \forall k$, with feasible values, and set $i \!=\! 1$ \\
	\Repeat{convergence or for fixed number of iterations}{
	{Solve \eqref{eq:P1-prob-3} with $\{\nu_k\xiLoop, \mu_k\xiLoop\}$, and denote the local solution as $\bigl\{\mathbf{F}_{k}\iLoop, \gamma_k\iLoop, e_k\iLoop, {\rho}_k\iLoop\bigr\}$ \\
	Update $\nu_{k}\iLoop$ using~\eqref{eq:Fractional term 3} with $\bigl\{\mathbf{F}_{k}\iLoop, {\gamma}_k\iLoop\bigr\}$, and $\mu_{k}\iLoop$ using~\eqref{eq:Fractional term 3b} with $\bigl\{\mathbf{F}_{k}\iLoop, e_k\iLoop\bigr\}$  \\
	Set $i = i+1$
	}
}
\end{algorithm}
\vspace*{-8pt}
%


%
\subsection{Solution {of~\eqref{eq:P1b} via SCA} framework}
\label{subsec:SCA}
\vspace*{-8pt}

In this subsection, we employ the \ac{SCA} framework~\cite{beck2010sequential}, wherein the  constraints~\eqref{eq:P1b-C2} and~\eqref{eq:P1b-C3} are successively upper-bounded with a sequence of convex subsets via first-order linear Taylor series approximations. The underlying convex subproblem is then iteratively solved until the  desired convergence of the objective function. 
Note that, in contrast to the above-mentioned \ac{SDR}-\ac{FP} technique, here the optimized beamformers are obtained directly from the feasible solution, and no additional post-processing steps are required.
%
%
%
%
%
%
%
The \ac{SCA} based solutions have been widely used in many practical applications, e.g., spectrum sharing, energy efficiency, and multi-antenna interference coordination. For example, the SCA based linear Taylor series relaxation of the concave-convex fractional constraints is provided in our earlier work~\cite{dileep_ISWCS, Dileep_Globecom2020}. 
In the following, the main steps are briefly reproduced.


\subsubsection{Convex approximation for constraint~\eqref{eq:P1b-C2}}
\label{subsec:SINR_Approximation}~\\
\noindent We start by rewriting \ac{SINR} constraint~\eqref{eq:P1b-C2} as 
\begin{equation}
    \label{eq:SINR-rho-Divide}
    \gamma_k \leq \frac{|\mathbf{h}_{k}\herm \mathbf{f}_{k}|^2}{ \sum\limits_{u \in \mathcal{K} \backslash k} \! |\mathbf{h}_{k}\herm \mathbf{f}_{u}|^2 + \sigma_k^2 + \frac{\delta_k^2}{\rho_k}}, \ \ \forall k.
\end{equation}
For compact representation, we define new functions~as  
\begin{subequations}
    \label{eq:G_I_function}
\begin{align}
    \label{eq:G_function}
    & {G}_k(\mathbf{f}_k, \gamma_k) \triangleq \frac{|\mathbf{h}_{k}\herm \mathbf{f}_{k}|^2}{\gamma_k}, \\
    \label{eq:I_function}
    & I_k(\mathbf{F}, {\rho_k}) \triangleq \sum\limits_{u \in \mathcal{K} \backslash k} \! |\mathbf{h}_{k}\herm \mathbf{f}_{u}|^2 + \sigma_k^2 + \frac{\delta_k^2}{\rho_k},
\end{align}
\end{subequations}
where ${\mathbf{F}} \triangleq [\mathbf{f}_1, \mathbf{f}_2, \ldots, \mathbf{f}_K]$. Hence, expression~\eqref{eq:SINR-rho-Divide} can be equivalently written~as
\begin{equation}
    \label{eq:SINR_DC}
    I_k({\mathbf{F}}, {\rho_k}) - {G}_k(\mathbf{f}_k, \gamma_k) \leq 0, \ \ \forall k.
\end{equation}
Note that~\eqref{eq:G_function} is a quadratic-over-linear function, and~\eqref{eq:I_function} is a convex function with respect to the optimization variables, hence, the \ac{LHS} of~\eqref{eq:SINR_DC} is a difference of convex functions~\cite[Ch. 3]{boyd2004convex}. Thus, the linear convex approximation of the equivalent constraint~\eqref{eq:SINR_DC} can be obtained by replacing~${G}_k(\mathbf{f}_k, \gamma_k)$ with its first-order Taylor series approximation around a fixed operating point $\{ \mathbf{f}_k\iLoop, \gamma_k\iLoop\}$
\begin{align}
    \label{eq:Lin-Approx_G_function}
     & \widetilde{{G}}_k(\mathbf{f}_k, \gamma_k,  \mathbf{f}_k\iLoop, \gamma_k\iLoop) \triangleq  2 \Re \biggl\{ \frac{\mathbf{f}_k\iherm \mathbf{h}_k \mathbf{h}_k\herm }{\gamma_k\iLoop} \bigl(\mathbf{f}_k - \mathbf{f}_k \iLoop \bigr) \biggr\} 
    + \frac{|\mathbf{h}_{k}\herm \mathbf{f}_{k}\iLoop|^2}{ \gamma_k\iLoop } \biggl( 1 - \frac{\gamma_k-\gamma_k\iLoop}{\gamma_k\iLoop} \biggr).
\end{align}
%
%


\subsubsection{Convex approximation for constraint~\eqref{eq:P1b-C3}}
\label{subsec:EH_approximation}~\\
\noindent To begin with, we rewrite expression~\eqref{eq:P1b-C3} as  
\begin{equation}
    \label{eq:EH_divide}
    \frac{{e}_k}{\zeta_k(1-\rho_k)} \leq  \sum\limits_{j=1}^{K} |\mathbf{h}_{k}\herm \mathbf{f}_{j}|^2  + \sigma_k^2, \ \ \forall k.
\end{equation}
For compact representation, we define new functions as  
\begin{subequations}
    \label{eq:C_S_function}
\begin{align}
    \label{eq:C_function}
    & {C}_k(\rho_k) \triangleq \frac{1}{\zeta_k(1-\rho_k)}, \\
    \label{eq:S_function}
    & S_k(\mathbf{F}, e_k) \triangleq \frac{ \sum_{j=1}^{K} |\mathbf{h}_{k}\herm \mathbf{f}_{j}|^2 + \sigma_k^2 }{e_k}. 
\end{align}
\end{subequations}
Hence, expression~\eqref{eq:EH_divide} can be rewritten as
\begin{equation}
    \label{eq:EH_DC}
    C_k({\rho_k}) - {S}_k(\mathbf{F}, e_k) \leq 0, \ \ \forall k.
\end{equation}
Observe that~\eqref{eq:C_function} and \eqref{eq:S_function} are respectively convex and quadratic-over-linear functions with respect to the optimization variables. Hence, the \ac{LHS} of~\eqref{eq:EH_DC} is again a difference of convex functions~\cite[Ch. 3]{boyd2004convex}. Then, to provide the linear convex approximation of equivalent constraint~\eqref{eq:EH_DC}, we replace~${S}_k(\mathbf{F})$ with its first-order linear Taylor series approximation around a fixed operating point~$\{ \mathbf{F}\iLoop, e_k\iLoop\}$
\begin{align}
    \label{eq:Lin-Approx_S_function}
     & \widetilde{{S}}_k(\mathbf{F}, e_k,  \mathbf{F}\iLoop, e_k\iLoop)  \triangleq  2 \! \sum\limits_{j=1}^{K} \! \Re \Bigl\{\! {\mathbf{f}_j\iherm \mathbf{h}_k \mathbf{h}_k\herm} \bigl(\mathbf{f}_j - \mathbf{f}_j \iLoop \bigr) \! \Bigr\} 
     + \frac{ \sum\limits_{j=1}^{K} |\mathbf{h}_{k}\herm \mathbf{f}\iLoop_{j}|^2  + \sigma_k^2}{ e_k\iLoop } \biggl( 1 - \frac{e_k-e_k\iLoop}{e_k\iLoop} \biggr).
\end{align}

\subsubsection{Approximated convex subproblem}~\\
\noindent After replacing~\eqref{eq:P1b-C2} and~\eqref{eq:P1b-C3} with its linear  Taylor series approximations~\eqref{eq:Lin-Approx_G_function} and~\eqref{eq:Lin-Approx_S_function}, respectively, 
the problem~\eqref{eq:P1b} can be approximated as the following convex subproblem:  
\begin{subequations}
\label{eq:P1c}
\begin{align}
	\displaystyle 
	\begin{split}
     \underset{\mathbf{f}_k, {\gamma}_k, e_k, \rho_k}{\mathrm{min}} \ \ & V   \sum\limits_{k \in \mathcal{K}} \|\mathbf{f}_{k}\|^2   -     \sum\limits_{k\in\mathcal{K}}  (Q_k   + A_k  + Z_k )\log_{2}(1+\gamma_k) -   \sum\limits_{k\in\mathcal{K}}  \omega_k (B_k^{\mathrm{max}} -  B_k) e_k
	\end{split} \label{eq:P1c-objective} \\
	 \mathrm{s. t.} \ \
    \begin{split}
	&\displaystyle  I_k(\mathbf{F}, {\rho_k}) - \widetilde{{G}}_k(\mathbf{f}_k, \gamma_k, \mathbf{f}_k\iLoop, \gamma_k\iLoop) \leq 0, \ \ \forall k, \end{split} \label{eq:P1c-C2} \\
	\begin{split}
	&\displaystyle  C_k({\rho_k}) - \widetilde{{S}}_k(\mathbf{F}, e_k, \mathbf{F}\iLoop, e_k\iLoop) \leq 0, \ \ \forall k, 	\end{split} \label{eq:P1c-C3} \\
	\begin{split}
	\label{eq:P1c-C5}
	&\displaystyle \gamma_k \leq  2^{R_k^{\mathrm{max}}}-1, \ \ \forall k, \end{split}  \\
	\begin{split}
	\label{eq:P1c-C6}
	&\displaystyle {e}_k \leq E_k^{\mathrm{max}}, \ \ \forall k, \end{split}  \\
	& \displaystyle 0 \leq \rho_k \leq 1, \ \ \forall k  \label{eq:P1c-C4} .
\end{align}
\end{subequations} 
Note that the subproblem~\eqref{eq:P1c} provides an approximate solution in the proximity of a fixed operating point. 
Thus, by iteratively solving~\eqref{eq:P1c} with a convex optimization solver, e.g.,~$\mathrm{CVX}$~\cite{cvx}, and updating variables~$\{\mathbf{f}_k\iLoop, \gamma_k\iLoop, e_k\iLoop, \rho_k\iLoop\}$ with the current \ac{SCA} solution, as illustrated in Algorithm~\ref{algSCA}, we obtain a solution for problem~\eqref{eq:P1b}.
%
%
\textbf{\SetArgSty{textnormal}
\begin{algorithm}[]
\DontPrintSemicolon
	\caption{{SCA based} iterative algorithm for~\eqref{eq:P1c}}
	\label{algSCA}
	\SetAlgoLined
	Initialize with feasible starting point $\bigl\{\mathbf{{f}}_{k}\oLoop, {\gamma}_k\oLoop, e_k\oLoop, \rho_k\oLoop \bigr\}, \ \forall k$, and set $i = 1$ \\
	\Repeat{convergence or for fixed number of iterations}{
	{Solve \eqref{eq:P1c} with $\bigl\{\mathbf{f}_{k}\xiLoop\!, {\gamma}_k\xiLoop\!, e_k\xiLoop\!, \rho_k\xiLoop \bigr\}$ and denote the solution as $\bigl\{\mathbf{f}_{k}^{\star}, {\gamma}_k^{\star}, e_k^{\star},  \rho_k^{\star} \bigr\}$ \\
	Update $\bigl \{\mathbf{f}_{k}\iLoop = \mathbf{f}_{k}^{\star}\bigr \} $, $\bigl \{ {\gamma}_k\iLoop = {\gamma}_k^{\star} \bigr \}$, $\bigl \{ {e}_k\iLoop = {e}_k^{\star} \bigr \}$ and $\bigl \{\rho_k\iLoop = \rho_k^{\star}\bigr \} $ \\
	Set $i = i+1$
	}
}
\end{algorithm}}
%


\section{Delay-bounded Batteryless Devices}
\label{subsec:Infinite_Battery_ISWCS}


In this section, we study a special scenario of great practical significance, the case of batteryless \acp{UE}, which 
is key for self-sustainable and battery-free future wireless networks~\cite{Lopez.2021}. Delay bounded batteryless and low-power networks can play a crucial role for e.g., emergency networks and industrial automation scenarios, due to dense deployment and restricted human access.

%
%
As there is no  battery in the user devices, the energy harvested from the RF received signals is immediately available to support the receiver operations in the current time slot~\cite{Qingjiang-TWC-SWIPTKey_2014, Shi-SWIPT-2014}, e.g., based on the \emph{harvest-use} strategy~\cite[Section III]{Meng-HU_HSU-2016}\footnote{The  harvested energy is temporarily stored in a built-in capacitor, which can be immediately used for receiver operations~\cite{Meng-HU_HSU-2016}.}. 
Moreover, the remainder of the extra harvested energy will be discarded by the user devices. Therefore,  constraints~\eqref{eq:P1b-C5} and~\eqref{eq:P1b-C6} are  no longer required for this specific SWIPT-enabled batteryless setup~\cite{Qingjiang-TWC-SWIPTKey_2014}. However, the receiver must be able to harvest {an} adequate amount of energy, i.e., to support its circuit power consumption and decoding operations on the downlink received data. Let {$\ddot{e}_k(t)$ denote a (fixed) minimum harvested power \ac{QoS} requirement of the $k$-th \ac{UE} for seamless receiver operations during time~slot~$t$.} In the following, time index~$t$ is omitted for notation brevity.
Thereby, the problem~\eqref{eq:P1b} can be recast~as 
%
%
\begin{subequations}
\label{eq:P1b_InfB}
\begin{align}
	\displaystyle 
	\begin{split}
    \underset{\mathbf{f}_k, {\gamma}_k, \rho_k}{\mathrm{min}} \ \ & V  \sum\limits_{k \in \mathcal{K}} \|\mathbf{f}_{k}\|^2 -  \sum\limits_{k\in\mathcal{K}} ( Q_k + A_k + Z_k)\log_{2}(1+\gamma_k)  
	\end{split} \label{eq:P1b-objective_InfB} \\
	 \mathrm{s. t.} \ \
     \begin{split}
	 &\displaystyle  \gamma_k  \leq  \frac{\rho_k |\mathbf{h}_{k}\herm \mathbf{f}_{k}|^2}{\rho_k \! \sum\limits_{u \in \mathcal{K} \backslash k} \! |\mathbf{h}_{k}\herm \mathbf{f}_{u}|^2 +  \rho_k\sigma_k^2 +  {\delta_k^2}}, \ \ \forall k, \end{split} \label{eq:P1b-C2_InfB} \\
	\begin{split}
	&\displaystyle \ddot{e}_k \leq \zeta_k\bigl(1-\rho_k\bigr)\Bigl( \sum\limits_{j=1}^{K} |\mathbf{h}_{k}\herm \mathbf{f}_{j}|^2   +  \sigma_k^2 \Bigr),
	\ \ \forall k, 	\end{split} \label{eq:P1b-C3_InfB} \\
	& \displaystyle  0 \leq \rho_k \leq 1, \ \ \forall k.  \label{eq:P1b-C4_InfB}
\end{align}
\end{subequations} 
Problem~\eqref{eq:P1b_InfB} is intractable due to the non-convex \ac{SINR} constraint~\eqref{eq:P1b-C2_InfB}, and the coupling between 
optimization variables~$\{\mathbf{f}_k, \rho_k\}$
in expression~\eqref{eq:P1b-C2_InfB} and~\eqref{eq:P1b-C3_InfB}.

    
We again adopt the \ac{SCA} framework~\cite{beck2010sequential}, wherein the non-convex constraints~\eqref{eq:P1b-C2_InfB} and~\eqref{eq:P1b-C3_InfB} are upper-bounded with their first-order Taylor series approximations.
Note that~\eqref{eq:P1b-C2_InfB} is equivalent to~\eqref{eq:P1b-C2}, and it can be handled as described in Section~\ref{subsec:SINR_Approximation}. Moreover, following the steps presented in Section~\ref{subsec:EH_approximation}, constraint~\eqref{eq:P1b-C3_InfB} can be addressed as follows.

To begin with, we rewrite expression~\eqref{eq:P1b-C3_InfB} as  
\begin{equation}
    \label{eq:EH_divide_InfB}
    \frac{\ddot{e}_k}{\zeta_k\bigl(1-\rho_k\bigr)} \leq  \sum\limits_{j=1}^{K} |\mathbf{h}_{k}\herm \mathbf{f}_{j}|^2  + \sigma_k^2, \ \ \forall k.
\end{equation}
For compact representation, we define new functions as  
\begin{subequations}
    \label{eq:C_S_function_InfB}
\begin{align}
    \label{eq:C_function_InfB}
    & \ddot{C}_k(\rho_k) \triangleq \frac{\ddot{e}_k}{\zeta_k\bigl(1-\rho_k\bigr)}, \\
    \label{eq:S_function_InfB}
    & \ddot{S}_k(\mathbf{F}) \triangleq  \sum\limits_{j=1}^{K} |\mathbf{h}_{k}\herm \mathbf{f}_{j}|^2  + \sigma_k^2. 
\end{align}
\end{subequations}
Hence, expression~\eqref{eq:EH_divide_InfB} can be equivalently rewritten as
\begin{equation}
    \label{eq:EH_DC_InfB}
    \ddot{C}_k({\rho_k}) - \ddot{S}_k(\mathbf{F}) \leq 0, \ \ \forall k.
\end{equation}
We can observe that both~\eqref{eq:C_function_InfB} and~\eqref{eq:S_function_InfB} are convex functions, and hence, the {LHS} of~\eqref{eq:EH_DC_InfB} is a difference of convex functions~\cite[Ch. 3]{boyd2004convex}. 
We replace the quadratic function~$\ddot{S}_k(\mathbf{F})$ 
with its first-order Taylor series approximation around a fixed operating point $\{ \mathbf{F}\iLoop\}$~as
\begin{align}
    \label{eq:Lin-Approx_S_function_InfB}
     & \widetilde{\ddot{S}}_k(\mathbf{F}, \mathbf{F}\iLoop)  \triangleq  2  \sum\limits_{j=1}^{K} \Re \Bigl\{ {\mathbf{f}_j\iherm \mathbf{h}_k \mathbf{h}_k\herm} \bigl(\mathbf{f}_j - \mathbf{f}_j \iLoop \bigr) \Bigr\} 
      + \sum\limits_{j=1}^{K} |\mathbf{h}_{k}\herm \mathbf{f}\iLoop_{j}|^2 + \sigma_k^2 .
\end{align}

Thereby, the problem~\eqref{eq:P1b_InfB} can be approximated as the following convex subproblem around a fixed operating point $\{ \mathbf{f}_k\iLoop, {\gamma}_k\iLoop \}$
\begin{subequations}
\label{eq:P1c_InfB}
\begin{align}
	\displaystyle 
	\begin{split}
     \underset{\mathbf{f}_k, {\gamma}_k, \rho_k}{\mathrm{min}} & \ \ V  \sum\limits_{k \in \mathcal{K}}  \|\mathbf{f}_{k}\|^2 -   \sum\limits_{k\in\mathcal{K}}  (Q_k   + A_k  + Z_k )\log_{2}(1+\gamma_k)
	\end{split} \label{eq:P1c-objective_InfB} \\
	 \mathrm{s. t.} \ \
    \begin{split}
	&\displaystyle \lambda_{k,1} : \ I_k(\mathbf{F}, {\rho_k}) - \widetilde{{G}}_k(\mathbf{f}_k, \gamma_k, \mathbf{f}_k\iLoop, \gamma_k\iLoop) \leq 0, \ \ \forall k, \end{split} \label{eq:P1c-C2_InfB} \\
	\begin{split}
	&\displaystyle \lambda_{k,2} : \ \ddot{C}_k(\rho_k) - \widetilde{\ddot{S}}_k(\mathbf{F}, \mathbf{F}\iLoop) \leq 0, \ \ \forall k, 	\end{split} \label{eq:P1c-C3_InfB} \\
	& \displaystyle \lambda_{k,3} : \ \rho_k \geq 0, \ \ \forall k,  \label{eq:P1c-C4_InfB} \\
	& \displaystyle \lambda_{k,4} : \ \rho_k \leq 1, \ \ \forall k,  \label{eq:P1c-C5_InfB}
\end{align}
\end{subequations} 
where $\bm{\lambda}_k=[\lambda_{k,1},\lambda_{k,2}, \lambda_{k,3}, \lambda_{k,4}]$ are non-negative Lagrange multipliers associated with each constraint. {The role of the Lagrange multipliers will become clear in the following subsection.} Note that  subproblem~\eqref{eq:P1c_InfB} can be iteratively solved using standard convex optimization tools, e.g., $\mathrm{CVX}$~\cite{cvx}, as illustrated in Algorithm~\ref{algSCA_InfB}.
%
%
\textbf{\SetArgSty{textnormal}
\begin{algorithm}[]
\DontPrintSemicolon
	\caption{{SCA based} iterative algorithm for~\eqref{eq:P1c_InfB}}
	\label{algSCA_InfB}
	\SetAlgoLined
	Initialize with feasible starting point $\bigl\{\mathbf{{f}}_{k}\oLoop, {\gamma}_k\oLoop, \rho_k\oLoop \bigr\}, \ \forall k$, and set $i = 1$ \\
	\Repeat{convergence or for fixed number of iterations}{
	{Solve \eqref{eq:P1c_InfB} with $\bigl\{\mathbf{f}_{k}\xiLoop, {\gamma}_k\xiLoop, \rho_k\xiLoop \bigr\}$ and denote the solution as $\bigl\{\mathbf{f}_{k}^{\star}, {\gamma}_k^{\star},  \rho_k^{\star} \bigr\}$ \\
	Update $\bigl \{\mathbf{f}_{k}\iLoop = \mathbf{f}_{k}^{\star}\bigr \} $, $\bigl \{ {\gamma}_k\iLoop = {\gamma}_k^{\star} \bigr \}$ and $\bigl \{\rho_k\iLoop = \rho_k^{\star}\bigr \} $ \\
	Set $i = i+1$
	}
}
\end{algorithm}}
\vspace*{-10pt}
%


\vspace*{-5pt}
\subsection{Solution {of \eqref{eq:P1c_InfB}} via KKT conditions}
\vspace*{-5pt}
Herein, we also provide a low-complexity iterative algorithm that does not rely on generic convex solvers. Specifically, we tackle~\eqref{eq:P1c_InfB} by iteratively solving a system of closed-form \ac{KKT} optimality conditions~\cite[Ch. 5.5]{boyd2004convex}. After some algebraic manipulations, we obtain the Lagrangian $\mathfrak{L}(\mathbf{F}, {\gamma}_k, \rho_k, \bm{\lambda}_k)$ of~\eqref{eq:P1c_InfB} as detailed in~\eqref{eq:Lagrangian-details}. 
\begin{figure*}[t]
\normalsize
\setcounter{mytempeqncnt}{\value{equation}}
\setcounter{equation}{40}
\begin{align}
\label{eq:Lagrangian-details}
& \mathfrak{L}(\mathbf{F}, {\gamma}_k, \rho_k,  \bm{\lambda}_k)  =  \sum\limits_{k \in \mathcal{K}} \biggl[ V \|\mathbf{f}_{k}\|^2 - ( Q_k + A_k  + Z_k )\log_{2}(1+\gamma_k)
+  \sum\limits_{u \in \mathcal{K} \backslash k} \lambda_{u,1} |\mathbf{h}_{u}\herm \mathbf{f}_{k}|^2
\nonumber \\ & \ \ 
+ \lambda_{k,1} \biggl\{ \! \sigma_k^2 \! + \! \frac{\delta_k^2}{\rho_k} \! - \! 2 \Re \Bigl\{ \! \frac{\mathbf{f}_k\iherm \mathbf{h}_k \mathbf{h}_k\herm }{\gamma_k\iLoop} \bigl( \mathbf{f}_k \! - \! \mathbf{f}_k \iLoop \bigr) \Bigr\} \! - \! 2 \frac{|\mathbf{h}_{k}\herm \mathbf{f}_{k}\iLoop|^2}{ \gamma_k\iLoop } \! + \! \gamma_k \frac{|\mathbf{h}_{k}\herm \mathbf{f}_{k}\iLoop|^2}{ (\gamma_k\iLoop)^2 }  \! \biggr\} 
%
%
%
%
\! + \! \lambda_{k,2} \biggl\{ \! \Bigl(\frac{1}{1-\rho_k} \Bigr)  \frac{\ddot{e}_k}{\zeta_k} \! \biggr\} 
\nonumber \\ & \ \
- 2 \sum\limits_{j=1}^{K} \lambda_{j,2}  \Re \Bigl\{ \! \mathbf{f}_k\iherm \mathbf{h}_j \mathbf{h}_j\herm  \bigl(\mathbf{f}_k - \mathbf{f}_k \iLoop \bigr) \! \Bigr\}  
%
%
%
%
\! - \! \lambda_{k,2} \Bigl\{  \sum\limits_{j=1}^{K} |\mathbf{h}_{k}\herm \mathbf{f}_{j}\iLoop|^2 \! + \! \sigma_k^2 \Bigr\}  
\! + \! \rho_k \bigl\{ \lambda_{k,4} \! - \! \lambda_{k,3} \bigr\}  \! - \! \lambda_{k,4} 
\biggr].
\end{align}
%
\setcounter{equation}{\value{mytempeqncnt}}
\setcounter{equation}{41}
\vspace*{-12pt}
\hrulefill
\vspace*{-22pt}
\end{figure*}
Next, by differentiating~\eqref{eq:Lagrangian-details} with respect to primal optimization variables~$\{\mathbf{f}_k, \gamma_k, \rho_k\}$, we obtain the stationarity conditions for~\eqref{eq:P1c_InfB}, given~by
\begin{subequations}
\label{eq:KKT_Derivate}
\begin{align}
\label{eq:Derivate_Precoder}
& \nabla_{{\mathbf{f}}_k}: \mathbf{f}_k\herm \Bigl(V\mathbb{I} + \!\!  \sum\limits_{u \in \mathcal{K} \backslash k} \lambda_{u,1} \mathbf{h}_u \mathbf{h}_u\herm \Bigr) 
=  \lambda_{k,1} \frac{\mathbf{f}_k\iherm \mathbf{h}_k \mathbf{h}_k\herm }{\gamma_k\iLoop} + \mathbf{f}_k\iherm \sum\limits_{j=1}^{K} \lambda_{j,2} \mathbf{h}_j \mathbf{h}_j\herm, \\ \displaybreak[0]
\label{eq:Derivate_gamma}
& \nabla_{{\gamma}_k}: \frac{Q_k + A_k  + Z_k}{1+\gamma_k} = \lambda_{k,1} \frac{|\mathbf{h}_{k}\herm \mathbf{f}_{k}\iLoop|^2}{ (\gamma_k\iLoop)^2 }, \\ \displaybreak[0]
\label{eq:Derivate_rho}
& \nabla_{{\rho}_k}: \lambda_{k,1} \frac{\delta_k^2}{\rho_k^2} = \lambda_{k,2} \frac{\ddot{e}_k}{\zeta_k(1-\rho_k)^2} + (\lambda_{k,4} - \lambda_{k,3}).
\end{align}
\end{subequations}
Further, in addition to primal-dual feasibility, the \ac{KKT} conditions include the complementary slackness, defined~as
\begin{subequations}
\label{eq:KKT_Slackness}
\begin{align}
\label{eq:Lam1}
& \lambda_{k,1} \geq 0; \qquad \lambda_{k,1} \bigl\{  I_k(\mathbf{F}, {\rho_k}) - \widetilde{{G}}_k(\mathbf{f}_k, \gamma_k, \mathbf{f}_k\iLoop, \gamma_k\iLoop) \bigr\}=0,  \ \ \forall k, \\ \displaybreak[0]
\label{eq:Lam2}
& \lambda_{k,2} \geq 0; \qquad \lambda_{k,2} \bigl\{  \ddot{C}_k({\rho_k}) - \widetilde{\ddot{S}}_k(\mathbf{F}, \mathbf{F}\iLoop)  \bigr\}=0,  \ \ \forall k, \\ \displaybreak[0]
\label{eq:Lam3}
& \lambda_{k,3} \geq 0; \qquad \lambda_{k,3} \bigl\{ 0-\rho_k \bigr\}=0,  \ \ \forall k, \\ \displaybreak[0]
\label{eq:Lam4}
& \lambda_{k,4} \geq 0; \qquad \lambda_{k,4} \bigl\{ \rho_k - 1 \bigr\}=0,  \ \ \forall k.
\end{align}
\end{subequations}
%
%
%
%
Note that the associated dual variable~$\{\bm{\lambda}_k\}$ is strictly positive only when the constraint is tight. Specifically, setting $\lambda_{k,3}\!>\!0$ would imply $\rho_k\!=\!0$. However, in such a case, expression~\eqref{eq:I_function} would become infeasible.
Similarly, $\lambda_{k,4}\!>\!0$ results in $\rho_k\!=\!1$, and it would make expression~\eqref{eq:C_function_InfB} infeasible. 
Thus, 
we can conclude that constraints~\eqref{eq:P1c-C4_InfB} and~\eqref{eq:P1c-C5_InfB} are not tight at a feasible solution of~\eqref{eq:P1c_InfB}, i.e., $0\!<\!\rho_k\!<\!1$ and $\lambda_{k,3}\!=\!\lambda_{k,3}\!=\!0,$ $\forall k\in\mathcal{K}$. 
%
%
%
%
%
Next, let us assume the Lagrange multipliers ${\lambda}_{k,1}\!>\!0$ and ${\lambda}_{2,1}\!>\!0, \ \forall k \in \mathcal{K}$. Then, after some algebraic manipulations of expressions~\eqref{eq:KKT_Derivate} and \eqref{eq:KKT_Slackness}, the closed-form steps in the iterative method~are
\begin{subequations}
\label{eq:KKT-i iteration}
\begin{align}
\label{eq:KKT-i, f i}
& \tilde{\mathbf{{f}}}_{k} = \Bigl(V\mathbb{I} + \sum\limits_{u \in \mathcal{K} \backslash k} \lambda_{u,1}\xiLoop \mathbf{h}_u \mathbf{h}_u\herm \Bigr)^{-1}
\biggl\{ \frac{\lambda_{k,1}\xiLoop}{\gamma_k\xiLoop} \mathbf{h}_k \mathbf{h}_k\herm \mathbf{f}_k\xiLoop   +  \sum\limits_{j=1}^{K} \lambda_{j,2}\xiLoop \mathbf{h}_j \mathbf{h}_j \herm \mathbf{f}_k\xiLoop\biggr\} 
, \\ \displaybreak[0]
 \label{eq:KKT-i, f update}
& \mathbf{{f}}_{k}\iLoop = \mathbf{{f}}_{k}\xiLoop + \beta \bigl(\tilde{\mathbf{{f}}}_{k} - \mathbf{{f}}_{k}\xiLoop \bigr), \\ \displaybreak[0]
\label{eq:KKT-i, rho i}
& \rho_k\iLoop = 1 -   \frac{\ddot{e}_k}{\zeta_k} \biggl\{2 \sum\limits_{j=1}^{K} \Re \Bigl\{ \mathbf{f}_j\xiherm \mathbf{h}_k \mathbf{h}_k\herm  \bigl(\mathbf{f}_j\iLoop - \mathbf{f}_j\xiLoop \bigr) \Bigr\} 
+  \sum\limits_{j=1}^{K} |\mathbf{h}_{k}\herm \mathbf{f}_{j}\xiLoop|^2 + \sigma_k^2 \biggr\}^{-1}, \\ \displaybreak[0]
\label{eq:KKT-i, SINR i}
& \gamma_k\iLoop = 2 \gamma_k\xiLoop \!\! + \! \Biggl\{ \! 2 \Re \biggl\{ \! \frac{\mathbf{f}_k\xiherm \mathbf{h}_k \mathbf{h}_k\herm }{\gamma_k\xiLoop} \bigl(\mathbf{f}_k\iLoop - \mathbf{f}_k\xiLoop \bigr) \! \biggr\} 
\! -  \! \! \! \sum\limits_{u \in \mathcal{K} \backslash k} \! |\mathbf{h}_{k}\herm \mathbf{f}_{u}\iLoop|^2 \! - \!  \sigma_k^2 \! - \! \frac{\delta_k^2}{\rho_k\iLoop}  \! \Biggr\}  
\frac{(\gamma_k\xiLoop)^2}{|\mathbf{h}_{k}\herm \mathbf{f}_{k}\xiLoop|^2}, \\ \displaybreak[0]
\label{eq:KKT-i, lam1}
%
& \lambda_{k,1}\iLoop =  \frac{ (Q_k + A_k  + Z_k) \big(\gamma_k\xiLoop\big)^2 }{(1+\gamma_k\iLoop) |\mathbf{h}_{k}\herm \mathbf{f}_{k}\xiLoop|^2 }, \\ \displaybreak[0] 
\label{eq:KKT-i, lam2}
& \lambda_{k,2}\iLoop = \frac{\zeta_k \delta_k^2 \lambda_{k,1}\xiLoop \bigl(1 \!- \!\rho_k\xiLoop \bigr)^2 }{\ddot{e}_k \bigl(\rho_k\xiLoop\bigr)^2},
%
\end{align}
\end{subequations}
%
%
%
%
%
%
%
%
%
where $\beta$ is a positive step size, i.e., $0< \! \beta \! \leq 1$. It is worth highlighting that the SWIPT beamformer design~\eqref{eq:KKT-i, f i} inherently has a multicast structure due to \ac{EH} from RF received signals~\cite{Gautam_-2021_Journal}. Specifically, the dual variables~$\{{\lambda}_{k,1}\}$ and $\{{\lambda}_{k,2}\}$ control the balance between information unicasting and energy multicasting to each user, respectively. Note that to avoid separate outer and inner loop updates, and hence, to speed up the convergence, the fixed operating point~$\{ \mathbf{f}_k\iLoop, {\gamma}_k\iLoop \}$ is also heuristically updated in each iteration along with the associated Lagrange multipliers. Thus, in general, the monotonic convergence can not be guaranteed. 
 However, to improve the convergence behaviour, we employ the best response (BR) method to regularize the beamformer update in expression~\eqref{eq:KKT-i, f update}.
In Section~\ref{sec:sim-results}, we show via a numerical example that, with a proper choice of $\beta$ in~\eqref{eq:KKT-i, f update}, a monotonic convergence of the objective function can be achieved with a fairly small number of iterations.
%
%
%
%
The proposed method consists of iteratively solving a system of closed-form \ac{KKT} equations as summarized in Algorithm~\ref{algKKT}. 
\vspace*{-10pt}
\SetArgSty{textnormal}
\begin{algorithm}[]
\DontPrintSemicolon
	\caption{{KKT based} iterative algorithm for~\eqref{eq:P1c_InfB}} 
	\label{algKKT}
	\SetAlgoLined
	Initialize with feasible starting point $\bigl\{\mathbf{{f}}_{k}\oLoop, {\gamma}_k\oLoop, \rho_k\oLoop, \bm{\lambda}_{k}\oLoop \bigr\}$, \ $\forall k $, and set $i = 1$  \\ 
	\Repeat{convergence or for fixed iterations}{
	{
	Solve $\tilde{\mathbf{{f}}}_{k}$ from~\eqref{eq:KKT-i, f i} and update $\mathbf{{f}}_{k}\iLoop$ using~\eqref{eq:KKT-i, f update}   \\
	Compute variables $\rho_k\iLoop$, ${\gamma}_k\iLoop$ from~\eqref{eq:KKT-i, rho i}, \eqref{eq:KKT-i, SINR i}, respectively \\ 
	Obtain Lagrange multipliers~$\lambda_{k,1}\iLoop$, $\lambda_{k,2}\iLoop$ from~\eqref{eq:KKT-i, lam1}, \eqref{eq:KKT-i, lam2}, respectively  \\
	%
	%
	Set $i = i+1$
	}
}
\end{algorithm}
\vspace*{-20pt}
%
%


\vspace*{-15pt}
\section{Simulation Results}
\label{sec:sim-results}
\vspace*{-5pt}

In this section, we provide numerical examples to assess the performance of the proposed algorithms for joint transmit beamformers and receive~\ac{PS} ratios optimization.
%
%
%
%
We consider a \ac{SWIPT} system with $K\!=\!2$ \acp{UE} being served by a \ac{BS} equipped with a \ac{ULA} of $N_t\!=\!8$ antenna elements. For simplicity, we assume identical parameters for all \acp{UE}. Specifically, we set noise variance $\delta_k^2\!=\!-50$~dBm and $\sigma_k^2\!=\!-70$~dBm; \ac{EH} conversion efficiency~$\zeta_k\!=\!0.8, \ \forall t, k$~\cite{Shi-SWIPT-2014}. Without loss of generality, we assume maximum battery capacity $B_k^{\mathrm{max}}\!=\!10$~Joules; fixed energy consumption $P_k^{\mathrm{cir}}\!=\!0$~dBm; energy efficiency of decoder circuit $\vartheta_k^d\!=\!0.5$~[Joules/bit]; normalized frame duration~$T_f\!=\!1$; and scaling factor~$\omega_k~\!=~\!150, \ \forall t, k$. Further, we consider a Poisson arrival process $A_k\!\sim\!\mathrm{Pois}(\alpha)$ with allowable queue backlog $Q_k^{\mathrm{th}}\!=\!5$~bits and tolerable violation probability $\epsilon\!=\!10\%$ in problem~\eqref{eq:P1-prob}~\cite{Dileep_Globecom2020}. We set the fixed minimum harvested power requirement $\ddot{e}_k(t)\!=\!10$~dBm, $ \forall t, k$ in problem~\eqref{eq:P1b_InfB}. Finally, we set the step size $\beta = 0.25$ in expressions~\eqref{eq:KKT-i, f update}.
If not mentioned otherwise, we consider problem~\eqref{eq:P1-prob} and all the results are averaged over randomly generated 3000 channel realizations.

We consider uncorrelated Rician fading to model the radio propagation channel. Specifically, the channel vector~$\mathbf{h}_k\!\in\!\mathbb{C}^{N_t\times1}$ of $k$-th \ac{UE} consists of a deterministic Line-of-Sight~(LoS) path~$\mathbf{h}_k^{\mathrm{LoS}}$ and a spatially uncorrelated non-LoS~(NLoS) path~$\mathbf{h}_k^{\mathrm{NLoS}}$, such that %
\begin{equation}
 \label{eq:Channel}
 \mathbf{h}_k(t) = \sqrt{\frac{\kappa}{1+\kappa}}\mathbf{h}_k^{\mathrm{LoS}}(t) + \sqrt{\frac{1}{1+\kappa}}\mathbf{h}_k^{\mathrm{NLoS}}(t), \ \ \forall t, k,
\end{equation}
where $\kappa$ is the Rician factor. We set $\kappa\!=\!5$~dB unless stated otherwise. Moreover, the NLoS component is modeled independently for each time slot~$t$ using Rayleigh fading with the path-loss of $-40$~dB. Meanwhile, the LoS component follows the standard far-field \ac{ULA} model, i.e., $\mathbf{h}_k^{\text{LoS}}(t)\!=\!\sqrt{10^{-4}}[1, \  e^{-\mathrm{j}\pi\sin(\theta_k(t))}, \ldots, e^{-\mathrm{j}(N_t-1)\pi\sin(\theta_k(t))}]\tran$, where $\theta_k(t)$ is the azimuth angle of $k$-th \ac{UE} during time slot~$t$ relative to the boresight of the \ac{BS} antenna array. The azimuth angles $\theta_k(t)~\!\in~\![-\pi/2, \ \pi/2],\ \forall t, k,$ are randomly generated.

\vspace*{-16pt}
\subsection{Convergence and complexity Analysis}
\vspace*{-6pt}

First, we investigate the convergence behavior of the proposed iterative algorithms for a given randomly generated channel realization. 
Note that the solution of~\eqref{eq:P1b}, both in Algorithm~\ref{algFP} and Algorithm~\ref{algSCA}, is obtained directly from the convex optimization toolbox $\mathrm{SeDuMi}$~\cite{cvx}. {Further, the per-iteration computational complexity is dominated by the downlink beamformer optimization step and scales exponentially with the length of the beamforming vector~$N_t$.}
{It can be observed from Fig.~\ref{fig:SDR_SCA_Convergance} that the \ac{SDR}-\ac{FP} based iterative Algorithm~\ref{algFP} provides faster convergence in terms of the required approximation point updates. 
The interior point methods are usually adopted to solve SDP formulations, and each iteration requires {$\mathcal{O}\bigl((K+N_t^{2})^{3.5}\bigr)$} arithmetic operations~\cite{Qingjiang-TWC-SWIPTKey_2014, Chen-complexity}, \cite[Section III]{Sidiropoulos-Complexity}. Moreover, Algorithm~\ref{algFP} also requires additional post-approximation procedures to recover the rank-one beamforming vectors~(see Section~\ref{subsec:SDR} for more details). 
On the contrary, the \ac{SCA} based iterative Algorithm~\ref{algSCA}  attains the feasible solution without any additional steps after the convergence. The approximated convex problems can be efficiently solved as a sequence of SOCP, and the worst-case computational complexity of each iteration is {$\mathcal{O}(\bigl(K+N_t)^{3.5}\bigr)$} \cite{Shi-SWIPT-2014, Chen-complexity}.} 
{Note that both  solutions exactly coincide, implying also that the Algorithm~\ref{algFP} yields a rank-one solution for the reformulated problem~\eqref{eq:P1-prob} in this particular scenario with the above considered setup configurations.} However, the \ac{SDR}-based optimization
becomes expensive, potentially much more than \ac{SCA} framework, as the number of transmit antennas~$N_t$ at the BS increases (e.g., matrix~$\{\mathbf{F}_k\}_{\forall k}$). 
%
%
Therefore, Algorithm~\ref{algFP} can be useful for delay-constrained devices with sufficient processing capabilities, while Algorithm~\ref{algSCA} can be applied to hardware-constrained devices to produce a more computationally efficient solution.

\begin{figure}[t!]
\begin{minipage}[c]{0.49\textwidth}
 \setlength\abovecaptionskip{-0.2\baselineskip}
 \centering
 \includegraphics[trim=0.25cm 0.08cm 0.15cm 0.15cm, clip, width=1\linewidth]{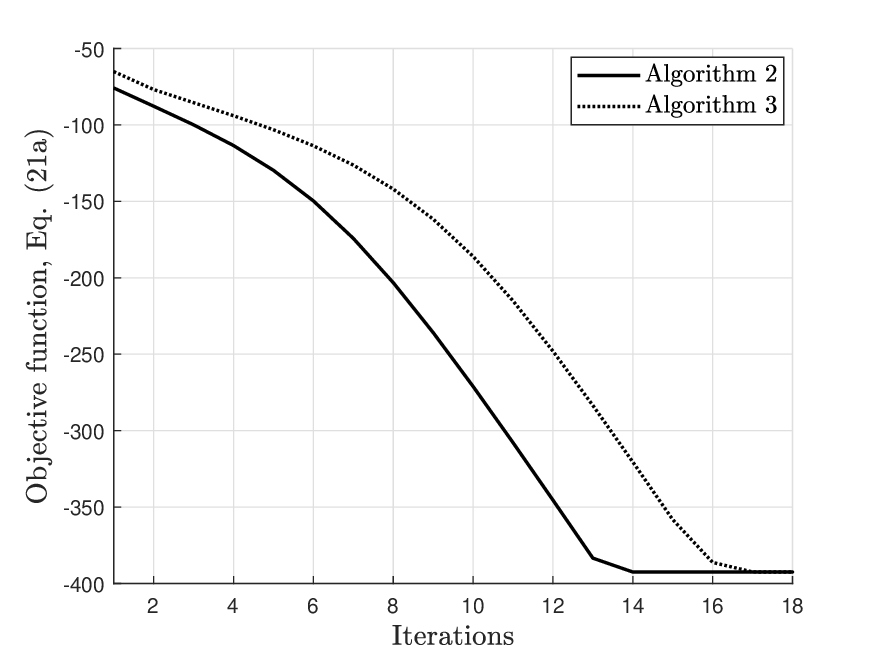}
 \caption{{Convergence behavior} of Algorithm~\ref{algFP} and Algorithm~\ref{algSCA} for~\eqref{eq:P1b} with $V\!=\!1$ and $\alpha\!=\!3$.}
 \label{fig:SDR_SCA_Convergance}
\end{minipage}
\hspace{1mm}
\begin{minipage}[c]{0.49\textwidth}
 \setlength\abovecaptionskip{-0.2\baselineskip}
 \centering
 \includegraphics[trim=0.25cm 0.08cm 0.15cm 0.15cm, clip, width=1\linewidth]{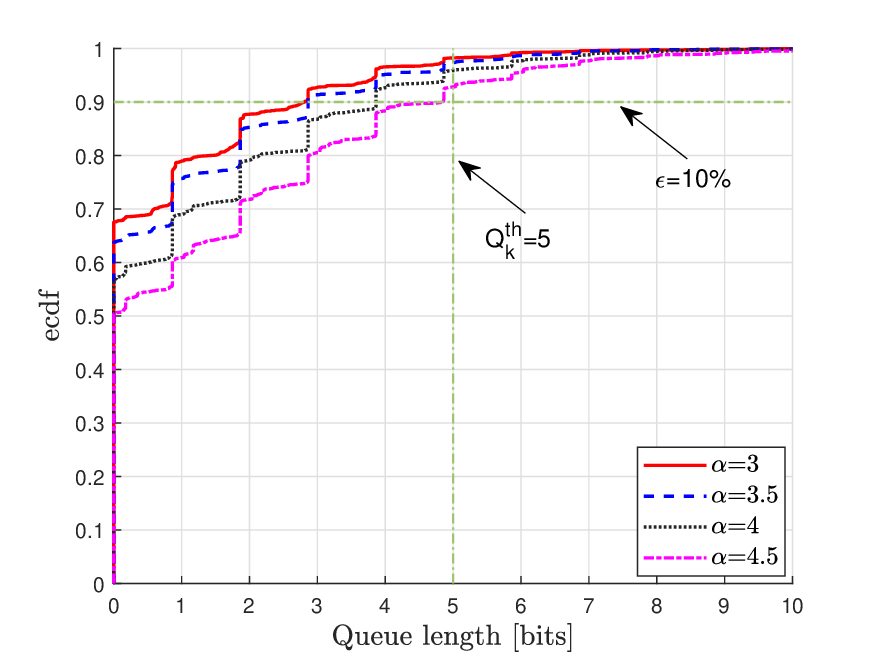}
 \caption{The queue backlog with different mean arrivals~$\alpha$ {and $V=1$}.}
 \label{fig:QueueBaclogs}
\end{minipage} 
 \vspace*{-28pt}
\end{figure}

\vspace*{-16pt}
\subsection{Impact of  parameter~$V$ and mean arrival rate~$\alpha$}
\vspace*{-6pt}

Herein, we investigate the impact of different mean arrival rate~$\alpha$ and trade-off parameter~$V$ on the achievable system performance. First, we set $V\!=\!1$ and illustrate the empirical cumulative distribution function (ecdf) of the queue backlogs in Fig.~\ref{fig:QueueBaclogs}.
It can be concluded that irrespective of the mean arrival~$\alpha$, the proposed method ensures the maximum queue backlogs of each user~$k$ (i.e., $Q_k^{\mathrm{th}}\!=\!5$) within tolerable violation probability $\epsilon\!=\!10\%$. Thus, the proposed convex relaxations to~\eqref{eq:P1-prob} still allow to satisfy the desired user-specific latency requirements (i.e., probabilistic queue backlog constraint~\eqref{eq:P1-C2} is met). {Note that similar performance trends can be obtained for different values of parameter~$V$, but this is not illustrated due to space limitations.}

\begin{figure}[t]
\begin{minipage}[c]{0.49\textwidth}
 \setlength\abovecaptionskip{-0.2\baselineskip}
 \centering
 \includegraphics[trim=0.25cm 0.08cm 0.15cm 0.15cm, clip, width=1\linewidth]{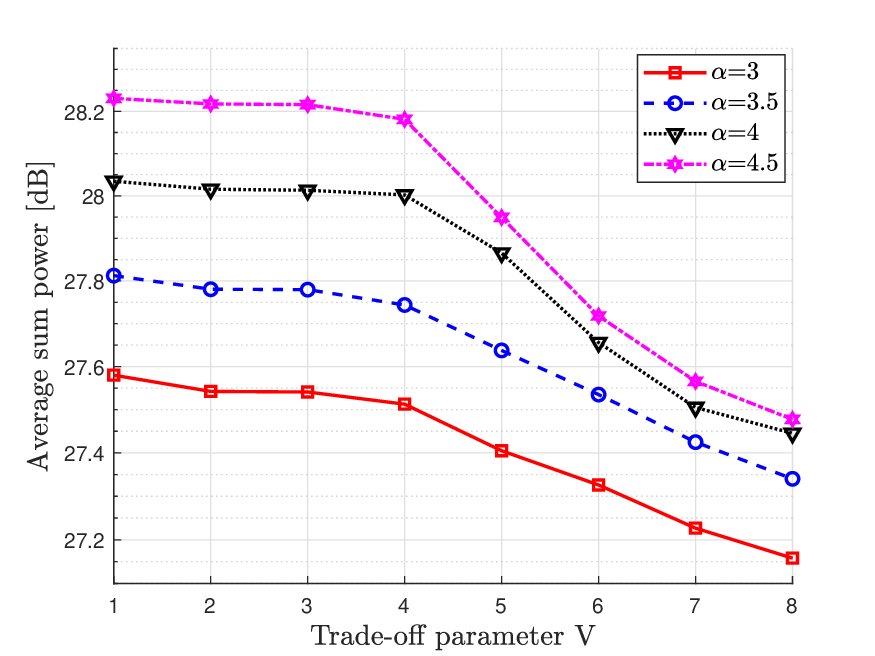}
 \caption{The BS average transmit power with {increasing $V$ and }different mean arrivals.}
 \label{fig:SumnPower}
\end{minipage}
\hspace{1mm}
\begin{minipage}[c]{0.49\textwidth}
 \setlength\abovecaptionskip{-0.2\baselineskip}
 \centering
 \includegraphics[trim=0.25cm 0.08cm 0.15cm 0.15cm, clip, width=1\linewidth]{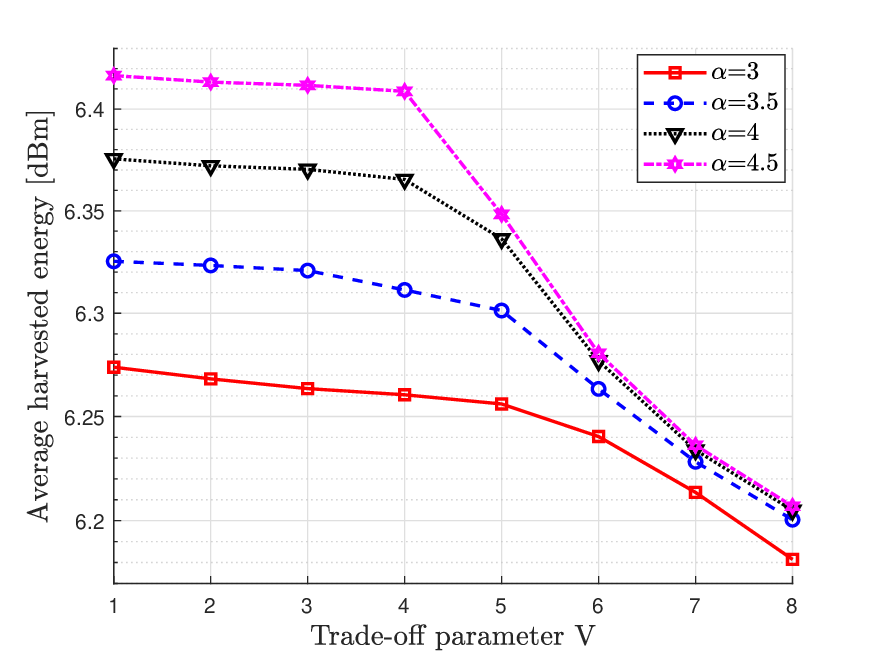}
 \caption{The receiver average harvested energy with {increasing $V$ and }different mean arrivals.}
 \label{fig:HarvestedPower_VS_V}
\end{minipage} 
 \vspace*{-28pt}
\end{figure}

Next, we examine the impact of different values of the trade-off parameter~$V$ on the average transmit power of the BS and the average harvested power at the receiver in Fig.~\ref{fig:SumnPower} and Fig.~\ref{fig:HarvestedPower_VS_V}, respectively. 
The result in Fig.~\ref{fig:SumnPower} shows that the sum-power decreases with an increasing~$V$. This is an expected behavior since~$V$ can be anticipated as a scaling factor for the sum-power objective function (see expression~\eqref{eq:P1b-objective} for details). Thus, higher values of~$V$ linearly emphasize the minimization of the BS transmit power over the sum of network queue backlogs and spare battery capacity until/unless the  queues length becomes substantially larger than the sum-power objective value. 
Furthermore, we can observe that the \ac{BS} sum-power significantly increases with the mean arrival rate. {This is mainly due to the fact that the increase in the network queue backlogs~\eqref{eq:Queue_Dynamics} enforces downlink transmission with higher data rates to reduce users' queues, e.g., at the cost of consuming more energy, and thus, requiring higher BS transmit powers to quickly recharge the devices.} Note that with the increasing~$V$ and $\alpha$, it also becomes more stringent to satisfy the user-specific latency constraints~\eqref{eq:P1-C2}, as will become clear in the following.  


Fig.~\ref{fig:HarvestedPower_VS_V} illustrates that the energy harvested at the receiver decreases with the increase of~$V$. This is mainly due to the decrease in the transmit power of the BS with an increasing~$V$, as also shown in Fig.~\ref{fig:SumnPower}. Note that a downlink transmission with less power leads, in general, to relatively less harvesting opportunities at the receivers.  
{Furthermore, we can observe that the harvested power increases with the mean arrival rate. Specifically, with the increasing network queue backlogs~\eqref{eq:Queue_Dynamics}, the BS attempts to increase the maximum downlink supported rates~\eqref{eq:Max_Data_Rate}. 
Thus, the \acp{UE} consume significant energy resources to support their current data decoding operations~\eqref{eq:Energy_Consumption}, which eventually leads to greater \ac{EH} requirements in subsequent time slots due to less available spare battery capacity (refer to Section II-C for more details). In this regard, our proposed virtual battery queue with the scaling on spare battery capacity~\eqref{eq:Pertubed_Battery} enforces more stringent \ac{EH} requirements at the UEs
to quickly recharge them
and 
also stabilize the time-average battery queues~\eqref{eq:Battery_Dynamics}.}     
All in all, there is a strong interplay among network queue length, BS transmit power, and receiver spare battery capacity, as can be seen from Fig.~\ref{fig:QueueBaclogs}$-$Fig.~\ref{fig:HarvestedPower_VS_V}.

\vspace*{-16pt}
\subsection{Network {and virtual} queue dynamics}
\label{subsec:sim-networkQueue}
\vspace*{-6pt}

Fig.~\ref{fig:ActualQueue} and Fig.~\ref{fig:VirtualQueue} respectively show the dynamics of network queue backlogs~$\{Q_k(t)\}$ and associated virtual queue~$\{Z_k(t)\}$ over time {with parameter~$V\!=\!\{1, 8\}$} for a given~\ac{UE}. 
%
%
It can be observed that the time dynamics of queues include a transient state and a steady state. Furthermore, the steady state is only attained after accumulating certain backlogs in the associated virtual queues~$\{Z_k(t)\}$ as shown in Fig.~\ref{fig:VirtualQueue}. Thus, until/unless the virtual queue reaches a certain value at which it gets saturated (e.g., {around $t\!=\!\{26, 34\}$ for $V\!=\!\{1, 8\}$}), only then the actual network queue oscillate, so as the probabilistic queue length constraint $\mathrm{Pr}\bigl\{ Q_k(t) \geq Q_k^{\mathrm{th}} \bigr\} \! \leq \! \epsilon$ is satisfied, i.e., to achieve the probabilistic delay requirements~\eqref{eq:P1-C2}. This is mainly because of the negative drift property of the Lyapunov function~\cite[Ch. 4.4]{neely2010stochastic}. Specifically, the stability of associated virtual queues~$\{Z_k(t)\}$ strictly ensures that the network queues are bounded~\cite[Theorem 2.5]{neely2010stochastic}, which allows achieving the desired user-specific queue backlogs performance.

\begin{figure}[]
\begin{minipage}[c]{0.49\textwidth}
 \setlength\abovecaptionskip{-0.2\baselineskip}
 \centering
 \includegraphics[trim=0.25cm 0.08cm 0.15cm 0.15cm, clip, width=1\linewidth]{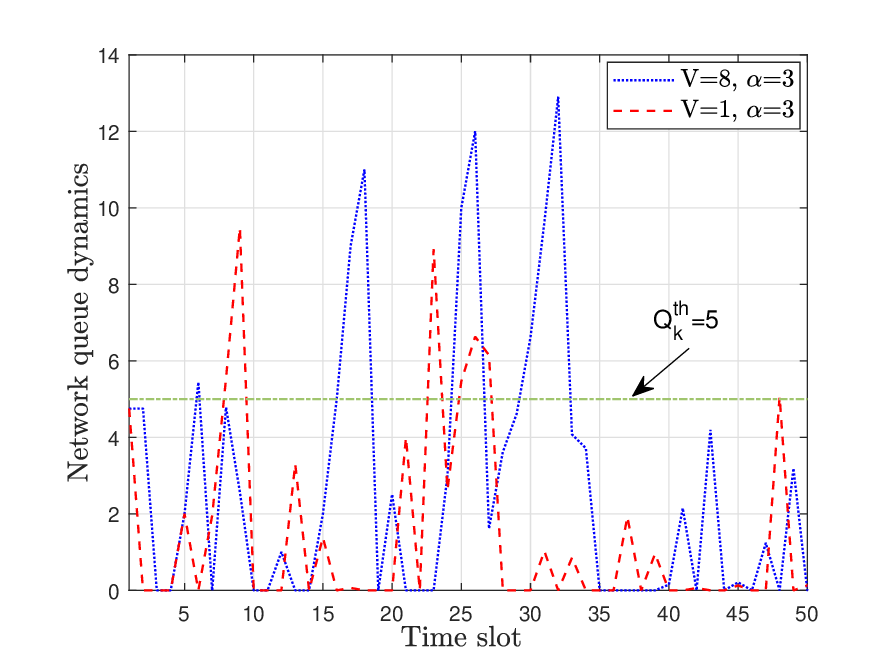}
 \caption{{The dynamics of network queues.}}
 \label{fig:ActualQueue}
\end{minipage}
\hspace{1mm}
\begin{minipage}[c]{0.49\textwidth}
 \setlength\abovecaptionskip{-0.2\baselineskip}
 \centering
 \includegraphics[trim=0.25cm 0.08cm 0.15cm 0.15cm, clip, width=1\linewidth]{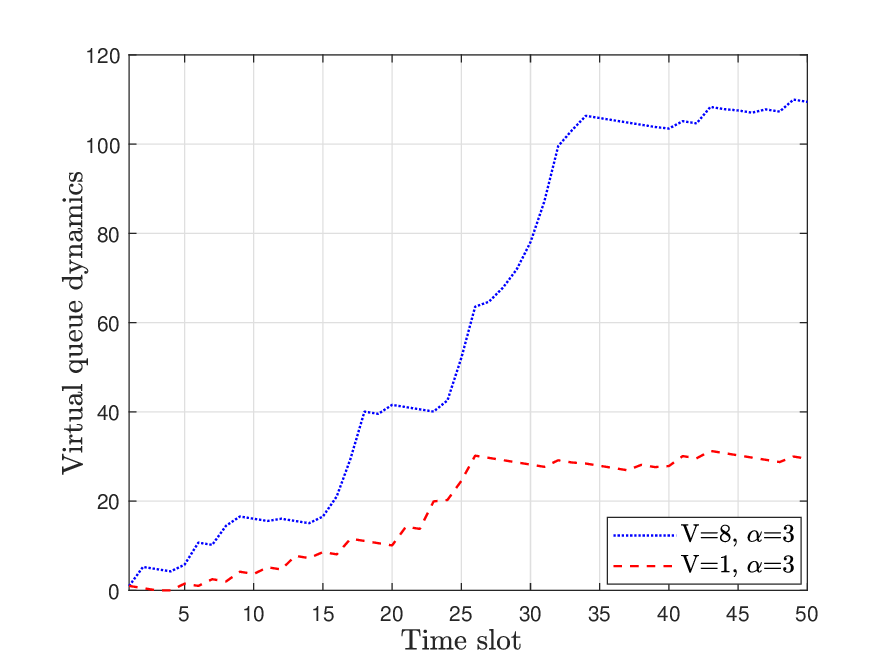}
 \caption{{The dynamics of virtual queues.}}
 \label{fig:VirtualQueue}
\end{minipage} 
 \vspace*{-30pt}
\end{figure}


\vspace*{-16pt}
\subsection{Receiver battery dynamics}
\label{subsec:sim-batteryQueue}
\vspace*{-6pt}

Next, we examine the performance of receiver battery capacity in Fig.~\ref{fig:BatteryLevels_VS_V} and Fig.~\ref{fig:BatteryECDF}. It can be concluded from Fig.~\ref{fig:BatteryLevels_VS_V} that the average battery capacity decreases with an increasing~$V$, mainly due to the decrease in BS transmit power and corresponding less harvesting opportunities at the receiver. As an example, by decreasing the parameter~$V\!=\!8$ to $V\!=\!1$, the minimum battery charge level of a given UE is improved {from $B_k\!=\!1.45$ Joules to $B_k\!=\!3.1$ Joules}, as shown in Fig.~\ref{fig:BatteryECDF}. However, the average battery capacity remains somewhat similar with different mean arrival rates.  {This is due to the fact that the arrival with a higher rate enforces higher BS transmit power, and thus more  harvested energy at the receiver (see Fig.~\ref{fig:HarvestedPower_VS_V}) to support the more energy-demanding receive decoding process.} Specifically, with increasing arrivals, the battery is quickly recharged to support the receivers' operation, and thus the battery level remains fairly stable.

\begin{figure}[]
\begin{minipage}[c]{0.49\textwidth}
 \setlength\abovecaptionskip{-0.2\baselineskip}
 \centering
 \includegraphics[trim=0.25cm 0.08cm 0.15cm 0.15cm, clip, width=1\linewidth]{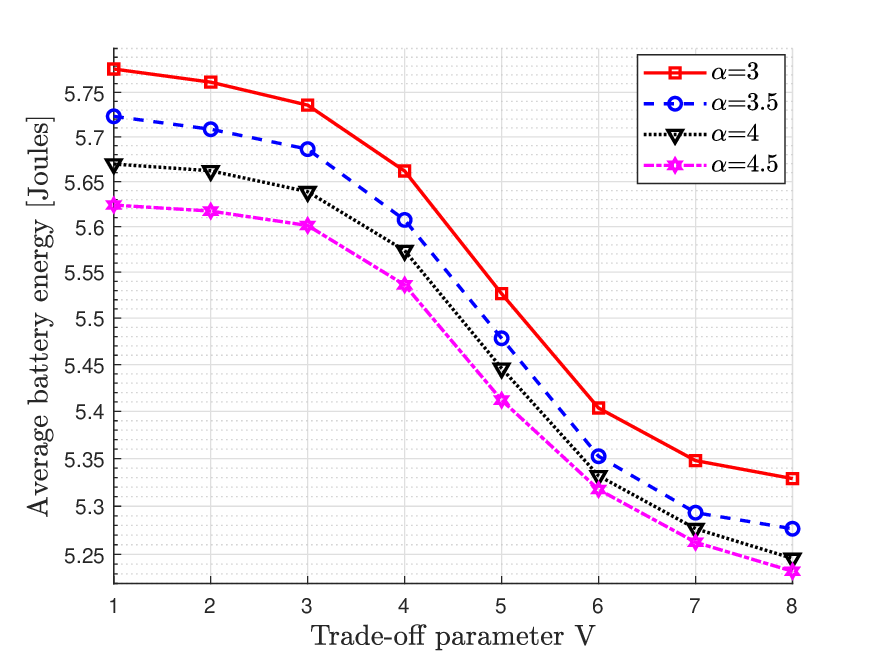}
 \caption{The receiver average battery energy with {increasing $V$ and }different mean arrivals.}
 \label{fig:BatteryLevels_VS_V}
\end{minipage}
\hspace{1mm}
\begin{minipage}[c]{0.49\textwidth}
 \setlength\abovecaptionskip{-0.2\baselineskip}
 \centering
 \includegraphics[trim=0.25cm 0.08cm 0.15cm 0.15cm, clip, width=1\linewidth]{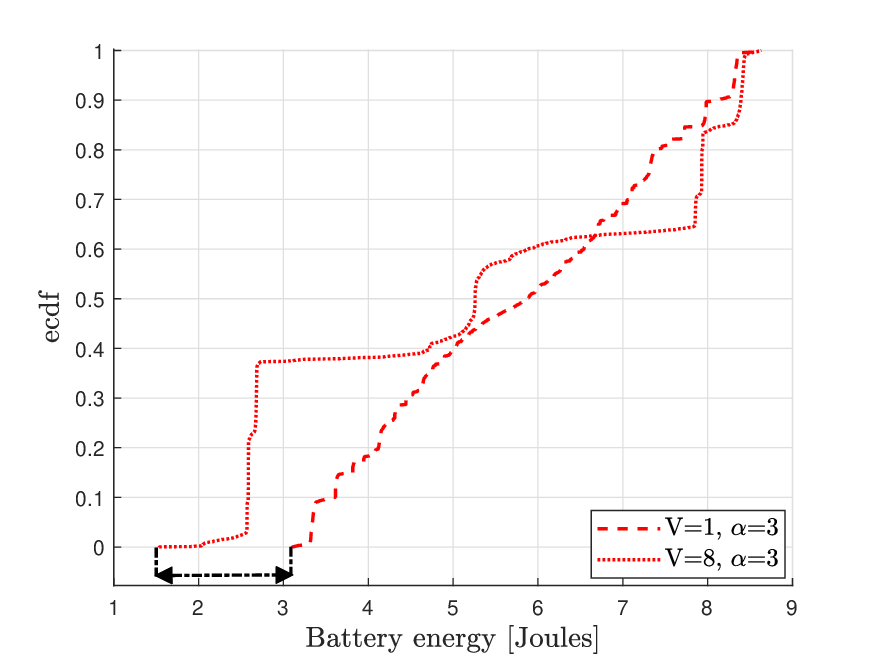}
 \caption{The ecdf of battery charge levels with $V\!\in\!\{1, 8\}$ and mean arrival~$\alpha=3$.}
 \label{fig:BatteryECDF}
\end{minipage} 
 \vspace*{-28pt}
\end{figure}



To summarize, even after the proposed relaxations to problem~\eqref{eq:P1-prob}, the solution still allows to satisfy the desired user-specific latency and maximum \ac{EH} requirements. That is, constraints~\eqref{eq:P1-C2}-\eqref{eq:P1-C4} are strictly met with the minimum BS transmit power. Thus, with the proposed dynamic control algorithm and suitable parameterization of drift-plus-penalty functions, one can easily handle the network queue backlogs, optimize the \ac{BS} transmit power, and the receivers' battery charge level, while ensuring the user-specific latency and \ac{EH} requirements of, e.g., industrial-grade delay bounded critical applications for  factory automation scenarios.



\vspace*{-16pt}
\subsection{Delay-bounded batteryless devices}
\vspace*{-6pt}

For the special scenario of delay bounded SWIPT-enabled batteryless receivers, we illustrate in Fig.~\ref{fig:SCA_KKT_Convergance} the run time and the convergence performance of Algorithm~\ref{algSCA_InfB} and Algorithm~\ref{algKKT} for solving problem~\eqref{eq:P1b_InfB}. {Note that different from the \ac{SCA} based iterative approach in Algorithm~\ref{algSCA_InfB}, the solution of Algorithm~\ref{algKKT} is based on an iterative evaluation of closed-form \ac{KKT} expressions~\eqref{eq:KKT-i iteration}, thus, it does not rely on  generic convex solvers.} 
We can observe that the solution via Algorithm~\ref{algKKT} achieves superior performance to  Algorithm~\ref{algSCA_InfB}, e.g., around $98\%$ improvement in the run time performance. {Further, the computational complexity is mainly dominated by expression~\eqref{eq:KKT-i, f i}, which  consists of matrix multiplication and inverse operations. Hence, each iteration requires {$\mathcal{O}(N_t^{2.37})$} arithmetic operations using, e.g., Coppersmith–Winograd algorithm~\cite{COPPERSMITH1990251}, \cite[Appendix C]{boyd2004convex}.} Thus, Algorithm~\ref{algKKT} provides more practical, latency-conscious, and computationally efficient implementations. 
At this point, it is worth highlighting that a \ac{KKT} based closed-form iterative method may not be attained for~\eqref{eq:P1-prob} due to underlying complexity, coupling among optimization variables, and temporally correlated user-specific battery energy and \ac{EH} constraints. 

{Finally, Fig.~\ref{fig:Queue_fixedEH} illustrates the user-specific latency performance for different minimum harvested power requirements with the trade-off parameter $V\!=\!1$ and mean arrival rate $\alpha\!=\!3$. The result shows that, irrespective of the fixed harvested power constraint~\eqref{eq:P1b-C3_InfB}, our proposed method ensures the maximum backlogs of each user~$k$ (i.e., $Q^{\mathrm{th}}_k\!=\!5$) within the allowable violation probability $\epsilon\!=\!10\%$. Thus, the proposed convex relaxations to problem~\eqref{eq:P1b_InfB} still allow to satisfy the desired latency requirements.}
%
The other numerical examples are omitted
due to space limitation, but similar performance trends can easily be obtained for
~\eqref{eq:P1b_InfB}~\cite{dileep_ISWCS, dileep_PIMRC}.

\begin{figure}[t!]
\begin{minipage}[c]{0.49\textwidth}
 \setlength\abovecaptionskip{-0.2\baselineskip}
 \centering
 \includegraphics[trim=0.25cm 0.08cm 0.15cm 0.15cm, clip, width=1\linewidth]{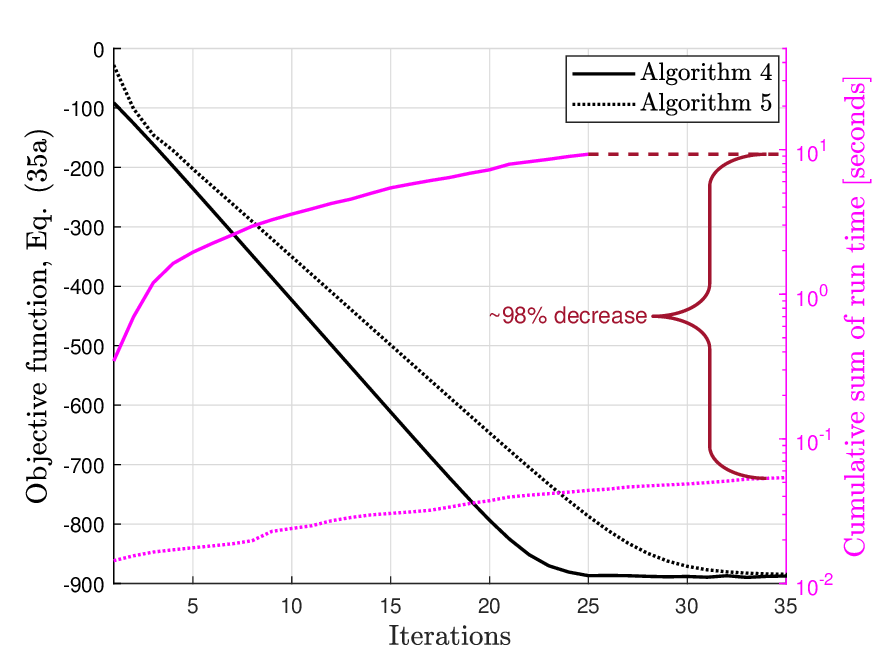}
 \caption{{Convergence behavior} of Algorithm~\ref{algSCA_InfB} and Algorithm~\ref{algKKT} for~\eqref{eq:P1b_InfB} with $V\!=\!1$ and $\alpha\!=\!3$.}
 \label{fig:SCA_KKT_Convergance}
\end{minipage}
\hspace{1mm}
\begin{minipage}[c]{0.49\textwidth}
 \setlength\abovecaptionskip{-0.2\baselineskip}
 \centering
 \includegraphics[trim=0.25cm 0.08cm 0.15cm 0.15cm, clip, width=1\linewidth]{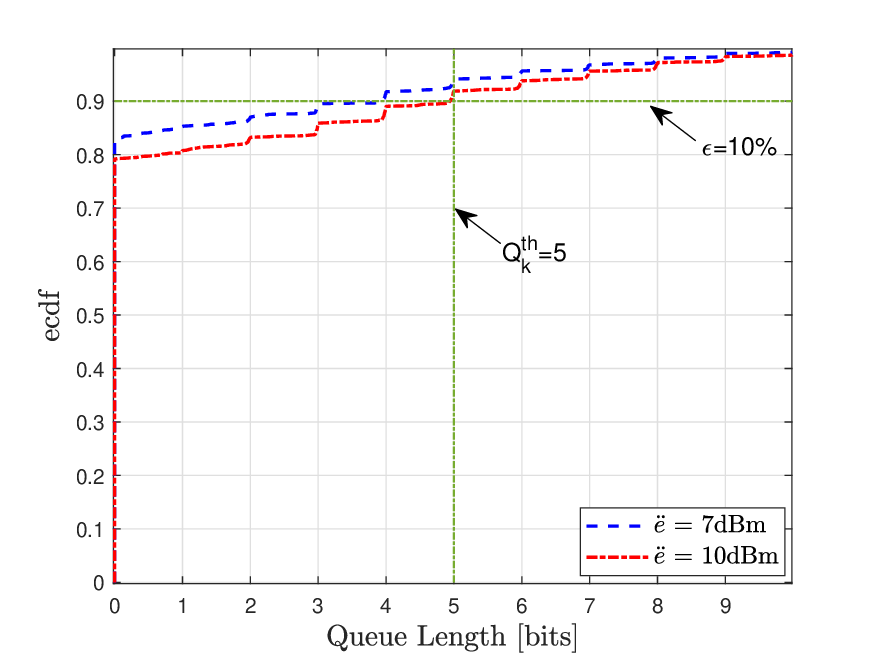}
 \caption{{The queue backlog performance for different EH requirement with $V\!=\!1$ and $\alpha\!=\!3$.}}
 \label{fig:Queue_fixedEH}
\end{minipage} 
 \vspace*{-20pt}
\end{figure}

\section{Conclusion}
\label{sec:Conclusion}

In this paper, we considered a \ac{PS} based \ac{SWIPT} system and provided a joint optimization of transmit beamforming vectors and receive \ac{PS} ratios, while accounting for the limited battery energy at each \ac{UE}. Specifically, we considered a long-term average \ac{BS} transmit power minimization problem concurrently satisfying the user-specific latency and maximum \ac{EH} requirements. The proposed radio resource allocation schemes efficiently avoid the receivers' battery depletion phenomenon by preemptively incorporating the spare battery capacity and EH fluctuations in a time dynamic mobile access network. To provide a tractable solution, we employed the Lyapunov optimization framework, and provided an online dynamic control algorithm to obtain a series of per-time slot deterministic subproblems. 
Furthermore, the coupled and non-convex constraints were handled by applying the techniques of \ac{SDR}-\ac{FP} and \ac{SCA} framework. A closed-form iterative algorithm was designed by solving a system of \ac{KKT} optimality conditions for a special case of delay bounded batteryless \acp{UE}. The simulation results manifested the robustness of the proposed design to realize an energy-efficient \ac{SWIPT} system for industrial-grade 
delay bound 
applications. 

\bibliographystyle{IEEEtran}
\bibliography{IEEEabrv,ref_conf_short,ref_jour_short,referencesSWIPT}

\begin{thebibliography}{10}
\providecommand{\url}[1]{#1}
\csname url@samestyle\endcsname
\providecommand{\newblock}{\relax}
\providecommand{\bibinfo}[2]{#2}
\providecommand{\BIBentrySTDinterwordspacing}{\spaceskip=0pt\relax}
\providecommand{\BIBentryALTinterwordstretchfactor}{4}
\providecommand{\BIBentryALTinterwordspacing}{\spaceskip=\fontdimen2\font plus
\BIBentryALTinterwordstretchfactor\fontdimen3\font minus
  \fontdimen4\font\relax}
\providecommand{\BIBforeignlanguage}[2]{{%
\expandafter\ifx\csname l@#1\endcsname\relax
\typeout{** WARNING: IEEEtran.bst: No hyphenation pattern has been}%
\typeout{** loaded for the language `#1'. Using the pattern for}%
\typeout{** the default language instead.}%
\else
\language=\csname l@#1\endcsname
\fi
#2}}
\providecommand{\BIBdecl}{\relax}
\BIBdecl

\bibitem{dileep_PIMRC}
D.~Kumar, O.~L.~A. {López}, A.~{T{\"o}lli}, and S.~Joshi, ``{Latency-aware}
  joint transmit beamforming and receive power splitting for {SWIPT} systems,''
  in \emph{Proc. IEEE Int. Symp. Pers., Indoor, Mobile Radio Commun.}, 2021,
  pp. 490--494.

\bibitem{dileep_ISWCS}
D.~Kumar, O.~L.~A. {López}, S.~Joshi, and A.~{T{\"o}lli}, ``Latency
  constrained simultaneous wireless information and power transfer,'' in
  \emph{Proc. Int. Symp. Wireless Commun. Systems}, 2021, pp. 1--6.

\bibitem{Lopez.2021}
O.~L.~A. López, H.~Alves, R.~D. Souza, S.~Montejo-Sánchez, E.~M.~G.
  Fernández, and M.~Latva-Aho, ``Massive wireless energy transfer: Enabling
  sustainable {IoT} toward {6G} era,'' \emph{{IEEE} Internet Things J.},
  vol.~8, no.~11, pp. 8816--8835, 2021.

\bibitem{Meng-HU_HSU-2016}
M.-L. Ku, W.~Li, Y.~Chen, and K.~J. Ray~Liu, ``Advances in energy harvesting
  communications: Past, present, and future challenges,'' \emph{{IEEE} Commun.
  Surveys Tuts.}, vol.~18, no.~2, pp. 1384--1412, 2016.

\bibitem{Choi-SWIPT-2016}
K.~W. {Choi}, S.~I. {Hwang}, A.~A. {Aziz}, H.~H. {Jang}, J.~S. {Kim}, D.~S.
  {Kang}, and D.~I. {Kim}, ``Simultaneous wireless information and power
  transfer ({SWIPT}) for {Internet of Things}: Novel receiver design and
  experimental validation,'' \emph{{IEEE} Internet Things J.}, vol.~7, no.~4,
  pp. 2996--3012, 2020.

\bibitem{Clerckx-2019-JSAC}
B.~Clerckx, R.~Zhang, R.~Schober, D.~W.~K. Ng, D.~I. Kim, and H.~V. Poor,
  ``Fundamentals of wireless information and power transfer: From {RF} energy
  harvester models to signal and system designs,'' \emph{{IEEE} J. Sel. Areas
  Commun.}, vol.~37, no.~1, pp. 4--33, 2019.

\bibitem{Zhang-MIMO-Braoadcasting-2013}
R.~{Zhang} and C.~K. {Ho}, ``{MIMO} broadcasting for simultaneous wireless
  information and power transfer,'' \emph{{IEEE} Trans. Wireless Commun.},
  vol.~12, no.~5, pp. 1989--2001, 2013.

\bibitem{Zhou-Zhang-2013}
X.~{Zhou}, R.~{Zhang}, and C.~K. {Ho}, ``Wireless information and power
  transfer: Architecture design and rate-energy tradeoff,'' \emph{{IEEE} Trans.
  Commun.}, vol.~61, no.~11, pp. 4754--4767, 2013.

\bibitem{Rv_Delay-aware}
M.~Sinaie, P.-H. Lin, A.~Zappone, P.~Azmi, and E.~A. Jorswieck, ``{Delay-Aware
  Resource Allocation for {5G} Wireless Networks With Wireless Power
  Transfer},'' \emph{{IEEE} Trans. Veh. Technol.}, vol.~67, no.~7, pp.
  5841--5855, 2018.

\bibitem{Xiao-JSAC-2016}
X.~Lu, I.~Flint, D.~Niyato, N.~Privault, and P.~Wang, ``Self-sustainable
  communications with {RF} energy harvesting: Ginibre point process modeling
  and analysis,'' \emph{{IEEE} J. Sel. Areas Commun.}, vol.~34, no.~5, pp.
  1518--1535, 2016.

\bibitem{Lu.2015}
X.~Lu, P.~Wang, D.~Niyato, D.~I. Kim, and Z.~Han, ``Wireless networks with {RF}
  energy harvesting: A contemporary survey,'' \emph{IEEE Commun. Surveys
  Tuts.}, vol.~17, no.~2, pp. 757--789, 2015.

\bibitem{Ikhlef-ComLet-2014}
A.~Ikhlef, ``Optimal {MIMO} multicast transceiver design for simultaneous
  information and power transfer,'' \emph{{IEEE} Commun. Lett.}, vol.~18,
  no.~12, pp. 2153--2156, 2014.

\bibitem{Xu-Zhang-TSP-2014}
J.~Xu, L.~Liu, and R.~Zhang, ``Multiuser {MISO} beamforming for simultaneous
  wireless information and power transfer,'' \emph{{IEEE} Trans. Signal
  Process.}, vol.~62, no.~18, pp. 4798--4810, 2014.

\bibitem{Park-TWC-2014}
J.~Park and B.~Clerckx, ``Joint wireless information and energy transfer in a
  {K}-user {MIMO} interference channel,'' \emph{{IEEE} Trans. Wireless
  Commun.}, vol.~13, no.~10, pp. 5781--5796, 2014.

\bibitem{Gautam_-2021_Journal}
S.~{Gautam}, E.~{Lagunas}, S.~{Chatzinotas}, and B.~{Ottersten}, ``Feasible
  point pursuit and successive convex approximation for transmit power
  minimization in {SWIPT}-multigroup multicasting systems,'' \emph{IEEE Trans.
  on Green Commun. Netw.}, vol.~5, no.~2, pp. 884--894, 2021.

\bibitem{Qingjiang-TWC-SWIPTKey_2014}
Q.~Shi, L.~Liu, W.~Xu, and R.~Zhang, ``Joint transmit beamforming and receive
  power splitting for {MISO} {SWIPT} systems,'' \emph{{IEEE} Trans. Wireless
  Commun.}, vol.~13, no.~6, pp. 3269--3280, 2014.

\bibitem{Shi-SWIPT-2014}
Q.~{Shi}, W.~{Xu}, T.~{Chang}, Y.~{Wang}, and E.~{Song}, ``Joint beamforming
  and power splitting for {MISO} interference channel with {SWIPT}: An {SOCP}
  relaxation and decentralized algorithm,'' \emph{{IEEE} Trans. Signal
  Process.}, vol.~62, no.~23, pp. 6194--6208, 2014.

\bibitem{Zhang-TVT-2017}
H.~Zhang, A.~Dong, S.~Jin, and D.~Yuan, ``Joint transceiver and power splitting
  optimization for multiuser {MIMO} {SWIPT} under {MSE} {QoS} constraints,''
  \emph{{IEEE} Trans. Veh. Technol.}, vol.~66, no.~8, pp. 7123--7135, 2017.

\bibitem{Dong-ICC-2016}
A.~Dong, H.~Zhang, D.~Wu, and D.~Yuan, ``{QoS}-constrained transceiver design
  and power splitting for downlink multiuser {MIMO} {SWIPT} systems,'' in
  \emph{Proc. IEEE Int. Conf. Commun.}, 2016, pp. 1--6.

\bibitem{Baosheng-Zhang-ComLet-2016}
B.~Xu, Y.~Zhu, and R.~Zhang, ``Optimized power allocation for interference
  channel with {SWIPT},'' \emph{IEEE Wireless Commun. Lett.}, vol.~5, no.~2,
  pp. 220--223, 2016.

\bibitem{Timotheou-TWC-2014}
S.~Timotheou, I.~Krikidis, G.~Zheng, and B.~Ottersten, ``Beamforming for {MISO}
  interference channels with {QoS} and {RF} energy transfer,'' \emph{{IEEE}
  Trans. Wireless Commun.}, vol.~13, no.~5, pp. 2646--2658, 2014.

\bibitem{Zong-TWC-2016}
Z.~Zong, H.~Feng, F.~R. Yu, N.~Zhao, T.~Yang, and B.~Hu, ``Optimal transceiver
  design for {SWIPT} in {K}-user {MIMO} interference channels,'' \emph{{IEEE}
  Trans. Wireless Commun.}, vol.~15, no.~1, pp. 430--445, 2016.

\bibitem{Qingjiang-TSP-2016}
Q.~Shi, C.~Peng, W.~Xu, M.~Hong, and Y.~Cai, ``Energy efficiency optimization
  for {MISO} {SWIPT} systems with zero-forcing beamforming,'' \emph{{IEEE}
  Trans. Signal Process.}, vol.~64, no.~4, pp. 842--854, 2016.

\bibitem{Javier-Battery-2014}
J.~Rubio and A.~Pascual-Iserte, ``Energy-aware broadcast multiuser-{MIMO}
  precoder design with imperfect channel and battery knowledge,'' \emph{{IEEE}
  Trans. Wireless Commun.}, vol.~13, no.~6, pp. 3137--3152, 2014.

\bibitem{Zhirui-Battery-2017}
Z.~Hu, T.~Zhang, and J.~Loo, ``Power allocation for coordinated multi-cell
  systems with imperfect channel and battery-capacity-limited receivers,''
  \emph{{IEEE} Commun. Lett.}, vol.~21, no.~12, pp. 2746--2749, 2017.

\bibitem{Zhaoxia-BatteryConf-2016}
Z.~Yang, Q.~Han, B.~Yang, B.~Hu, C.~Chen, and X.~Guan, ``Sum rate maximizing
  for {SWIPT} with limited battery capacity,'' in \emph{31st Youth Academic
  Annual Conference of Chinese Association of Automation (YAC)}, 2016, pp.
  25--30.

\bibitem{Sun-BatteryConf-2018}
S.~Mao, S.~Leng, J.~Hu, and K.~Yang, ``Utility-optimal resource allocation in
  energy harvesting powered {C-RAN},'' in \emph{Proc. IEEE Int. Conf. Commun.},
  2018, pp. 1--6.

\bibitem{Zhirui-Battery-2020}
Z.~Hu, C.~Zheng, M.~He, and H.~Wang, ``Joint {Tx} power allocation and {Rx}
  power splitting for {SWIPT} systems with battery status information,''
  \emph{IEEE Wireless Commun. Lett.}, vol.~9, no.~9, pp. 1442--1446, 2020.

\bibitem{Rv_Buffer-Aided}
J.~Ren, X.~Lei, P.~D. Diamantoulakis, Q.~Chen, and G.~K. Karagiannidis,
  ``{Buffer-Aided Secure Relay Networks With SWIPT},'' \emph{{IEEE} Trans. Veh.
  Technol.}, vol.~69, no.~6, pp. 6485--6499, 2020.

\bibitem{Rv_Optimal_Online}
H.~Mirghasemi, L.~Vandendorpe, and M.~Ashraf, ``{Optimal Online Resource
  Allocation for {SWIPT}-Based Mobile Edge Computing Systems},'' in \emph{Proc.
  IEEE Wireless Commun. and Networking Conf.}, 2020, pp. 1--8.

\bibitem{Rv_Energy_Efficiency}
S.~Mao, S.~Leng, S.~Maharjan, and Y.~Zhang, ``{Energy Efficiency and Delay
  Tradeoff for Wireless Powered Mobile-Edge Computing Systems With Multi-Access
  Schemes},'' \emph{{IEEE} Trans. Wireless Commun.}, vol.~19, no.~3, pp.
  1855--1867, 2020.

\bibitem{Rv_Resource_and_Task}
H.-S. Lee and J.-W. Lee, ``{Resource and Task Scheduling for {SWIPT} {IoT}
  Systems With Renewable Energy Sources},'' \emph{{IEEE} Internet Things J.},
  vol.~6, no.~2, pp. 2729--2748, 2019.

\bibitem{Rv_Boshkovska-ComLet-2015}
E.~Boshkovska, D.~W.~K. Ng, N.~Zlatanov, and R.~Schober, ``{Practical
  Non-Linear Energy Harvesting Model and Resource Allocation for {SWIPT}
  Systems},'' \emph{{IEEE} Commun. Lett.}, vol.~19, no.~12, pp. 2082--2085,
  2015.

\bibitem{Rv_Yong-TCom-2017}
Y.~Zeng, B.~Clerckx, and R.~Zhang, ``{Communications and Signals Design for
  Wireless Power Transmission},'' \emph{{IEEE} Trans. Commun.}, vol.~65, no.~5,
  pp. 2264--2290, 2017.

\bibitem{neely2010stochastic}
M.~J. Neely, \emph{Stochastic Network Optimization with Application to
  Communication and Queueing Systems}, ser. Synthesis Lectures on Communication
  Networks.\hskip 1em plus 0.5em minus 0.4em\relax Morgan \& Claypool, 2010,
  vol.~7.

\bibitem{gross2008fundamentals}
D.~Gross, J.~F. Shortle, J.~M. Thompson, and C.~M. Harris, \emph{Fundamentals
  of queueing theory}.\hskip 1em plus 0.5em minus 0.4em\relax John Wiley \&
  Sons, 2008.

\bibitem{Dileep_Globecom2020}
D.~{Kumar}, S.~K. {Joshi}, and A.~{T\"{o}lli}, ``Latency-aware reliable mmwave
  communication via multi-point connectivity,'' in \emph{Proc. IEEE Global
  Commun. Conf.}, 2020, pp. 1--6.

\bibitem{boyd2004convex}
S.~Boyd and L.~Vandenberghe, \emph{Convex optimization}.\hskip 1em plus 0.5em
  minus 0.4em\relax Cambridge university press, 2004.

\bibitem{FP_Shen_2018}
K.~{Shen} and W.~{Yu}, ``Fractional programming for communication systems-part
  {I}: Power control and beamforming,'' \emph{{IEEE} Trans. Signal Process.},
  vol.~66, no.~10, pp. 2616--2630, 2018.

\bibitem{beck2010sequential}
A.~Beck, A.~Ben-Tal, and L.~Tetruashvili, ``A sequential parametric convex
  approximation method with applications to nonconvex truss topology design
  problems,'' \emph{J. Global Optim., Springer}, vol.~47, no.~1, pp. 29--51,
  2010.

\bibitem{Luo-SDR-2010}
Z.~{Luo}, W.~{Ma}, A.~M. {So}, Y.~{Ye}, and S.~{Zhang}, ``Semidefinite
  relaxation of quadratic optimization problems,'' \emph{{IEEE} Signal Process.
  Mag.}, 2010.

\bibitem{Harri_2014}
H.~Pennanen, A.~{T{\"o}lli}, and M.~Latva-aho, ``Decentralized robust
  beamforming for coordinated multi-cell {MISO} networks,'' \emph{{IEEE} Signal
  Process. Lett.}, vol.~21, no.~3, pp. 334--338, 2014.

\bibitem{cvx}
M.~Grant and S.~Boyd, ``{CVX}: Matlab software for disciplined convex
  programming,'' \url{http://cvxr.com/cvx}, Mar. 2014.

\bibitem{Chen-complexity}
E.~Chen and M.~Tao, ``{ADMM}-based fast algorithm for multi-group multicast
  beamforming in large-scale wireless systems,'' \emph{{IEEE} Trans. Commun.},
  vol.~65, no.~6, pp. 2685--2698, 2017.

\bibitem{Sidiropoulos-Complexity}
N.~Sidiropoulos, T.~Davidson, and Z.-Q. Luo, ``Transmit beamforming for
  physical-layer multicasting,'' \emph{{IEEE} Trans. Signal Process.}, vol.~54,
  no.~6, pp. 2239--2251, 2006.

\bibitem{COPPERSMITH1990251}
D.~Coppersmith and S.~Winograd, ``Matrix multiplication via arithmetic
  progressions,'' \emph{Journal of Symbolic Computation}, vol.~9, no.~3, pp.
  251--280, 1990.

\end{thebibliography}

\begin{IEEEbiography}[{\includegraphics[trim=0.042cm 0.042cm 0.042cm 0.042cm, width=1in,height=1.25in,clip,keepaspectratio]{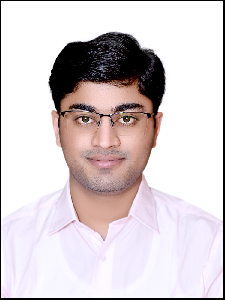}}]{Dileep Kumar} (Graduate Student Member, IEEE) received the master's degree in communication engineering from the Indian Institute of Technology Bombay, India, in 2015, and Dr.Sc. (Tech.) degree in communications engineering from the University of Oulu, Finland, in 2022. From 2015 to 2017, he worked for {NEC} Corporation, Tokyo, Japan, as a Research Engineer. In 2018, he joined the Centre for Wireless Communications (CWC), University of Oulu, Finland. His research interest includes signal processing for wireless communication systems.
\end{IEEEbiography}

\begin{IEEEbiography}[{\includegraphics[width=1in,height=1.25in,clip,keepaspectratio]{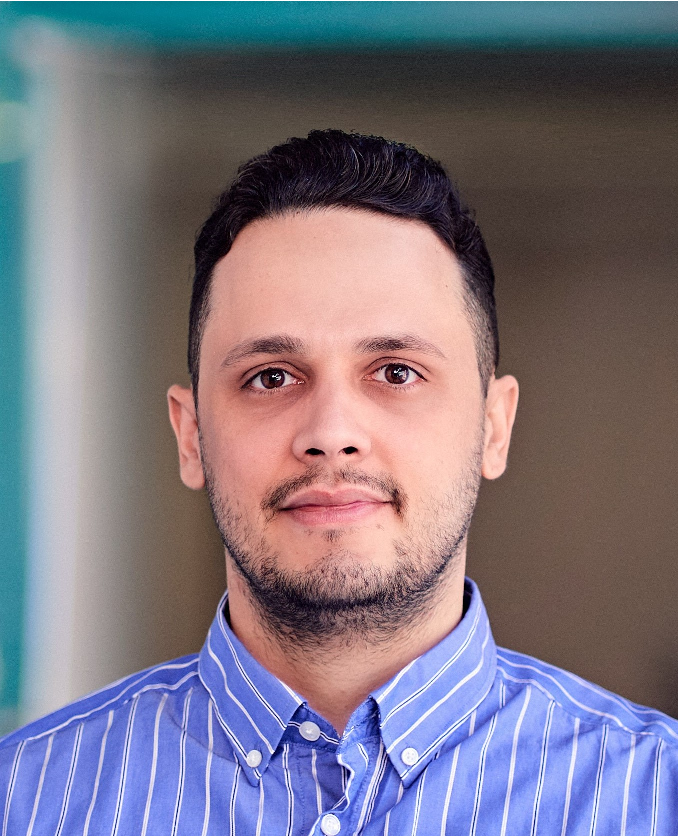}}]{Onel L. A. López} (Member, IEEE) was born in Sancti-Spíritus, Cuba, in 1989. He received the B.Sc. (1st class honors, 2013), M.Sc. (2017) and D.Sc. (with distinction, 2020) degree in Electrical Engineering from the Central University of Las Villas (Cuba),  the Federal University of Paraná (Brazil) and the University of Oulu (Finland), respectively. From 2013-2015 he served as a specialist in telematics at the Cuban telecommunications company (ETECSA). He is a collaborator to the 2016 Research Award given by the Cuban Academy of Sciences, a co-recipient of the 2019 IEEE European Conference on Networks and Communications (EuCNC) Best Student Paper Award, the recipient of both the 2020 best doctoral thesis award granted by Academic Engineers and Architects in Finland TEK and Tekniska Föreningen i Finland TFiF in 2021 and the 2022 Young Researcher Award in the field of technology in Finland. He authored the book entitled ``Wireless RF Energy Transfer in the massive IoT era: towards sustainable zero-energy networks'', Wiley, Dec 2021. He currently holds an Assistant Professorship (tenure track) in sustainable wireless communications engineering at the Centre for Wireless Communications (CWC), Oulu, Finland. His research interests include sustainable IoT, energy harvesting, wireless RF energy transfer, wireless connectivity, machine-type communications, and cellular-enabled positioning systems.
\end{IEEEbiography}

\begin{IEEEbiography}[{\includegraphics[width=1in,height=1.25in,clip,keepaspectratio]{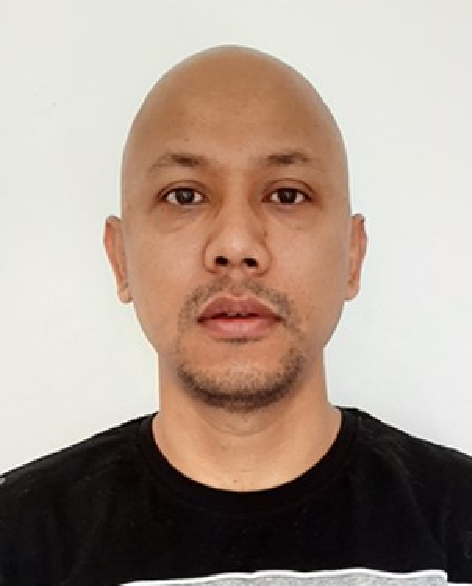}}]{Satya Krishna Joshi} (Member, IEEE) received the {M.Eng.} degree in telecommunications from the Asian Institute of Technology, Pathumthani, Thailand, in 2007, and {Ph.D} degree from the University of Oulu, Finland in 2018. His research interests include application of optimization techniques for signal processing and {MIMO} communications.
\end{IEEEbiography}

\begin{IEEEbiography}[{\includegraphics[width=1in,height=1.25in,clip,keepaspectratio]{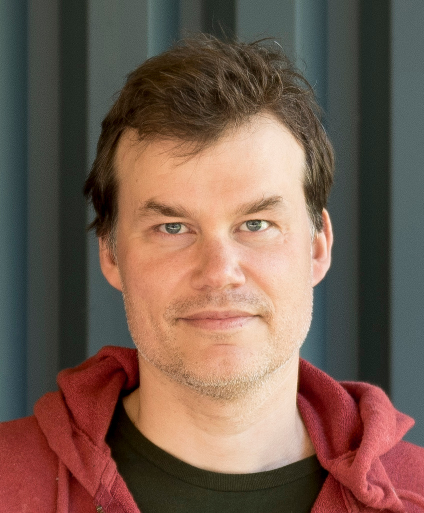}}]{Antti T\"{o}lli} (Senior Member, IEEE) is a Professor with the Centre for Wireless Communications (CWC), University of Oulu. He received the Dr.Sc. (Tech.) degree in electrical engineering from the University of Oulu, Oulu, Finland, in 2008. From 1998 to 2003, he worked at Nokia Networks as a Research Engineer and Project Manager both in Finland and Spain. In May 2014, he was granted a five year (2014-2019) Academy Research Fellow post by the Academy of Finland. During the academic year 2015-2016, he visited at EURECOM, Sophia Antipolis, France, while from August 2018 till June 2019 he was visiting at the University of California Santa Barbara, USA. He has authored numerous papers in peer-reviewed international journals and conferences and several patents all in the area of signal processing and wireless communications. His research interests include radio resource management and transceiver design for broadband wireless communications with a special emphasis on distributed interference management in heterogeneous wireless networks. From 2017 to 2021 he served as an Associate Editor for IEEE Transactions on Signal Processing. 
\end{IEEEbiography}

\end{document}